\newcommand{\OA}{\mathcal{O}}
\newcommand{\U}{\mathcal{U}}
\newcommand{\T}{\mathcal{T}}
\newcommand{\R}{\mathbb{R}}
\newcommand{\F}{\mathcal{F}}
\newcommand{\Y}{\mathcal{Y}}
\newcommand{\PA}{\mathcal{P}}
\newcommand{\SA}{\mathcal{S}}
\newcommand{\ZA}{\mathcal{Z}}
\newcommand{\sa}{\mathfrak{s}}
\newcommand{\ua}{\mathfrak{u}}
\newcommand{\Cb}{\mathbb{C}}
\newcommand{\Cf}{\mathfrak{C}}
\newcommand{\Bb}{\mathbb{B}}
\newcommand{\Bf}{\mathfrak{B}}
\newcommand{\RR}{\mathbb{R}} %Real numbers
\newcommand{\X}{\mathcal{X}}
\newtheorem{theorem}{Theorem}
\newtheorem{lemma}[theorem]{Lemma}
\newtheorem{proposition}[theorem]{Proposition}
\newtheorem{remark}{Remark}
\newcommand{\comm}[1]{}
\numberwithin{equation}{section}
\newcommand{\ignore}[1]{ }
\begin{document}

\begin{frontmatter}

%% Title, authors and addresses

%% use the tnoteref command within \title for footnotes;
%% use the tnotetext command for theassociated footnote;
%% use the fnref command within \author or \address for footnotes;
%% use the fntext command for theassociated footnote;
%% use the corref command within \author for corresponding author footnotes;
%% use the cortext command for theassociated footnote;
%% use the ead command for the email address,
%% and the form \ead[url] for the home page:
%% \title{Title\tnoteref{label1}}
%% \tnotetext[label1]{}
%% \author{Name\corref{cor1}\fnref{label2}}
%% \ead{email address}
%% \ead[url]{home page}
%% \fntext[label2]{}
%% \cortext[cor1]{}
%% \affiliation{organization={},
%%             addressline={},
%%             city={},
%%             postcode={},
%%             state={},
%%             country={}}
%% \fntext[label3]{}

\title{Reinforcement Learning in Non-Markovian Environments}

\author[SC]{Siddharth Chandak}\corref{cor1}
\address[SC]{Department of Electrical Engineering, Stanford University, Stanford, CA 94305, USA}
\ead{chandaks@stanford.edu}

\author[PS]{Pratik Shah}
\ead{pratik2002shah@gmail.com}
\address[PS]{Department of Mechanical Engineering,  Indian Institute of Technology Bombay,  Powai, Mumbai-400076, India} 

\author[VSB]{Vivek S Borkar\corref{cor1}}
\ead{borkar.vs@gmail.com}
\address[VSB]{Department of Electrical Engineering,  Indian Institute of Technology Bombay,  Powai, Mumbai-400076, India} 

\author[SC]{Parth Dodhia}
\ead{pdodhia@stanford.edu}

% \fntext[fn1]{ VSB is supported by the S.\ S.\ Bhatnagar Fellowship from the Council of Scientific and Industrial Research, Government of India. An early version of this work was presented at the workshop on `Learning-based Control of Queues and Networks', June 6, 2022, associated with SIGMETRICS 2022, Mumbai, and a short summary was presented in the INFORMS Applied Probability Conference, Nancy, France, on June 29th, 2023.}
\cortext[cor1]{Corresponding author}

\begin{abstract}
Motivated by the novel paradigm developed by Van Roy and coauthors for reinforcement learning in arbitrary non-Markovian environments, we propose a related formulation and explicitly pin down the error caused by non-Markovianity of observations when the Q-learning algorithm is applied to this formulation. Based on this observation, we propose that the criterion for agent design should be to seek good approximations for certain conditional laws. Inspired by classical stochastic control, we show that our problem reduces to that of recursive computation of approximate sufficient statistics. This leads to an autoencoder-based scheme for agent design which is then numerically tested on partially observed reinforcement learning environments.
\end{abstract}

%% use optional labels to link authors explicitly to addresses:
%% \author[label1,label2]{}
%% \affiliation[label1]{organization={},
%%             addressline={},
%%             city={},
%%             postcode={},
%%             state={},
%%             country={}}
%%
%% \affiliation[label2]{organization={},
%%             addressline={},
%%             city={},
%%             postcode={},
%%             state={},
%%             country={}}

%%\affiliation{organization={},%Department and Organization
%%            addressline={}, 
%%            city={},
%%            postcode={}, 
%%            state={},
%%            country={}}
%%
%%\begin{abstract}
%%%% Text of abstract
%%
%%\end{abstract}

%%%%%%Graphical abstract
%%%%\begin{graphicalabstract}
%%%%%\includegraphics{grabs}
%%%%\end{graphicalabstract}
%%%%
%%%%%%Research highlights
%%%%\begin{highlights}
%%%%\item Research highlight 1
%%%%\item Research highlight 2
%%%%\end{highlights}

\begin{keyword}
agent design \sep curse of non-Markovianity \sep recursively computed sufficient statistics \sep Q-learning \sep 
partially observed MDP 

%% keywords here, in the form: keyword \sep keyword

%% PACS codes here, in the form: \PACS code \sep code

%% MSC codes here, in the form: \MSC code \sep code
%% or \MSC[2008] code \sep code (2000 is the default)

\end{keyword}

\end{frontmatter}

%% main text

\section{Introduction}
In a series of remarkable works, Van Roy and coauthors laid down the theoretical foundations of agent design for reinforcement learning in an arbitrary environment \citep{Dong,Lu}.
The model in \cite{Dong} postulates an observed process $\{O_n\}$ taking values in a space of observations $\OA$, an agent state $\{S_n\}$ taking values in a state space $\SA$, and a control sequence $\{U_n\}$ taking values in an action space $\U$. The agent state is given recursively by
\begin{equation}
S_{n+1} = f(S_n,U_n, O_{n+1},\xi_{n+1}), \ n \geq 0, \label{state-eq}
\end{equation}
for some $f : \SA\times\U\times\OA\times\X \mapsto \SA$. Here $\{\xi_n\}$ is an independent and identically distributed  sequence of random variables taking values in some Polish space $\X$, representing extraneous noise. As a simplification, we henceforth drop $\xi_n$ from $f(\cdot)$ in \eqref{state-eq}, noting that this noise can be absorbed into the controls as randomized controls.  
At time $n$, a reward $r(S_n,U_n,O_{n+1})$ is received for a prescribed $r : \SA\times\U\times\OA \mapsto [0,1]$. The main result of \cite{Dong} is to establish a regret bound for the reward.
The model is further refined in 
 \cite{Lu}, but we stick to the simpler paradigm of \cite{Dong} above.
 
 {\color{black} The key point to note here is that the observables $\{O_n\}$  from the environment are \textbf{not} assumed to be Markov. This is the reality in most situations, the Markov model we impose is an approximation. This is explicitly so  e.g., when the model used is a discrete or finite dimensional caricature of a more complex problem, or because the convenient approximations imposed on the dynamics for analytic ease are only approximate (such as exponentiality of interarrival times in controlled queues). Also note that the agent dynamics (\ref{state-eq}) above, inclusive of the choice of the agent state, is a \textbf{design choice} we impose when we postulate the model. In \cite{Dong, Lu}, it is explicitly identified as such. Usually the physics of the problem may dictate a natural choice, but when that is not the case, a principled approach is called for. This design issue is the main problem we plan to address, which we come to later in this work after preparing the theoretical background for it. The theory, however, is built taking the model (\ref{state-eq}) as a given.}

In this work, we take a somewhat different view of this problem. We take the observation process $O_n$ to have been evolving forever, i.e., for $n > - \infty$, which allows us some simplification.  Let $O^n$ denote the semi-infinite sequence $[O_n, O_{n-1}, \cdots] \in \OA^\infty$ and let $U^n$ denote the sequence $[U_n, U_{n-1}, \cdots] \in \U^\infty$. Analogous notation will be used for other random processes. 
$\PA(\cdots)$ will always denote the space of probability measures on the Polish space `$\cdots$' with Prohorov topology, also known as the `topology of weak convergence' \citep{KRP}. For simplicity, we take $\SA, \U, \OA$ to be finite sets. {\color{black} By Tychonoff's theorem, $U^n, \SA^n, \OA^n$ are compact spaces under the product topology for $1\leq n \leq \infty$.}

Our main contributions are the following.
\begin{enumerate}
\item  By explicitly pinning down  the error caused by potential non-Markovianity of the observation process in the application of the Q-learning algorithm \citep{Watkins}, we propose an alternative performance criterion for agent design. 

\item We argue that this reduces the problem to seeking good approximations for certain conditional laws.

\item We justify achieving such approximation by `recursively computable approximate sufficient statistics' (RCASS), drawing upon the classical notion of Partially Observed Markov Decision Processes (POMDPs) in stochastic control.

\end{enumerate}

The notion of RCASS is neither very novel nor surprising. In fact, it is essentially the same as the notion of an `approximate information state' proposed and analyzed in the excellent recent article \cite{Aditya}, which also includes a perspective on how such notions have evolved in related literature, both classical and contemporary (see section 2.5 and the appendices of \textit{ibid.}). The difference is in the way we motivate it. We take Q-learning for discounted cost as a test case, a similar treatment should be possible in principle for other learning schemes as well. We show that if we use a surrogate state model as in (\ref{state-eq}) and treat it  as Markovian (though it is not) for purposes of the Q-learning algorithm, the latter still converges a.s.\ to a certain limit that can be interpreted. Furthermore, we give an analysis of transient errors where the error due to non-Markovianity  can be isolated clearly and should in fact be the basis for choosing the agent state dynamics (\ref{state-eq}), viz., with the objective of approximating certain conditional laws. This naturally leads to RCASS. Of course, we could target approximation of the relevant conditional \textit{expectations} alone, in this case of the value function estimate and the reward, which would be closer to the viewpoint taken in \cite{Aditya}. We opt for approximation of conditional laws because of its kinship with the classical `separated control problem' for POMDPs. This is the problem of controlling the completely observed Markov decision process of conditional laws of the state given past observations and controls (aka the `belief states'), given by the recursive nonlinear filter. This is dubbed `separated control' because the estimation and control aspects get separately delineated. Thus we view the dynamics (\ref{state-eq}) as a finite state approximation to the nonlinear filter. This point will be discussed later in this work along with   proposals for schemes to learn the agent dynamics. At a meta level, our attempt is to make a connection between reinforcement learning in general environments and classical stochastic control \citep{Kumar}. In the process, we also highlight a parallel with the classical `internal model principle' of control theory that deserves to be better appreciated as a guiding principle.

In Section 2, we begin by recalling the Q-learning algorithm applied to the agent state described above. We show that this algorithm converges to a fixed point which can be used to approximate the true $Q$-values corresponding to the policy based on the entire history and make control decisions accordingly. We also present a finite-time analysis, where using a suitable `decomposition' of the $Q$-learning iterate, we isolate different error terms along the lines of \cite{Chandak}, where all but the new term that captures the error due to non-Markovianity are inherited from \cite{Chandak} and, $(i)$ are applicable to the controlled Markov case as well, and, $(ii)$ are asymptotically negligible. The error due to non-Markovianity also vanishes asymptotically, but the concentration bounds, when available, are weaker, suggesting non-negligible finite time errors. We dub this difficulty in approximating the true $Q$-values and this additional error term as the `\textit{the curse of non-Markovianity}'. Hence a reasonable target is to  minimize them `to the extent possible'. This is precisely what gives us the criterion of well approximating conditional laws by a judicious choice of `approximate sufficient statistics' or, to get closer to machine learning parlance, `dynamic features'.

Section 3 uses the foregoing to make a case for approximation by a simpler model using recursively computable approximate sufficient statistics, with some rigorous justification. This amounts to developing an approximate dynamic model for the controller. This idea is not new, see, e.g., \cite{Yu}, which seeks finite state machines as candidate controllers for POMDPs and establishes their near-optimality. In the latter context, these finite state controllers can be thought of as approximate controlled nonlinear filters, or approximate POMDPs. We point out these parallels.

{\color{black}In Section 4, we present a computation scheme inspired by the autoencoder model from \cite{Champion}. Unlike autoencoders, where one tries to match the input with the output, our architecture tries to match the output with the next input (i.e., tries to predict the next observation). The agent state in this architecture is given by the output of the encoder. These agents states are passed to a Deep Q-Network (DQN) \cite{dqn} which then learns the control policy. We call the agent trained in this manner as the \textit{Non-Markovian Q-Agent}. This agent is then numerically tested on three environments. In the concluding section on future directions, we sketch another plausible scheme for learning the agent dynamics and point out issues with the stationarity assumption. }

\section{The curse of non-Markovianity}

\subsection{The model and the algorithm}

Recall the notation from the previous section. Our agent dynamics is given by
\begin{equation}
S_{n+1} = f(S_n, U_n, O_{n+1}), \ n > -\infty. \label{agentdyn}
\end{equation}
Here $\{O_n\}$ is the observed process, taking values in a finite space of observations $\OA$, $\{S_n\}$ is the agent state taking values in a finite state space $\SA$ with $|\SA|=\sa$,  $\{U_n\}$ is the control sequence  taking values in a finite action space $\U$ with $|\U|=\ua$.
We assume that the observation process $O_n$ has been evolving forever. We also assume that $\{(U_n,O_n)\}$ is stationary ergodic\footnote{\color{black} We argue later that without loss of generality, we may take $\{(S_n,U_n,O_n)\}$ to be stationary ergodic, see Remark \ref{R1} below.} (more generally, asymptotically stationary and ergodic will suffice) and satisfies 
\begin{equation}
P(O_{n+1} \in A\mid  O^n,U^n) =p(A\mid O^n,U^n), \label{kernel}
\end{equation}
for a continuous map 
$$(o,u) \in \OA^\infty\times\U^\infty \mapsto p(\cdot\mid o,u) \in \PA(\OA).$$ 
We take $\{U_n\}$ to be a stationary randomized policy, i.e., $$P(U_n=u\mid S_n=s)=\phi(u\mid s),$$ for a map $s \in \SA \mapsto \phi(\cdot\mid s) \in \PA(\U).$ {\color{black} We can construct measures on one sided infinite product spaces $\prod_{m=n}^\infty(\OA_m\times\U_m)$ for $n > -\infty$  by setting $(O_k,U_k) \equiv$ some fixed $(o,u) \in \OA\times\U$ for $k < n$. By the Ionescu-Tulcea theorem, the two maps defined above then uniquely define a probability measure $\widetilde{P}_n$ on 
$\prod_{m=n}^\infty(\OA_m\times\U_m)$, where $\OA_m,\U_m$ are replicas of $\OA,\U$. What we have here may be viewed as its inductive limit as $n\to-\infty$. }

We also assume that $P(S_n=s,U_n=u)$ is bounded away from zero for all $s\in\SA,u\in\U$ under the stationary distribution. Let $\tilde{\pi}(s,u)$ for $(s,u)  \in \SA^{\sa\times\ua}$ denote this stationary law for $(S_n,U_n)$. Similarly, let $\pi(\cdot)$ denote the stationary distribution for $(O^n,U^n)$. 

Let $r(S_n,U_n,O_{n+1})$ be the reward at time $n$, where $r : \SA\times\U\times\OA \mapsto [0,1]$. For $s\in\SA$ and $u\in\U$, define
$$\bar{r}(s,u) := E[r(S_n,U_n,O_{n+1})\mid S_n = s, U_n = u].$$ 
Consider the Q-learning algorithm of \cite{Watkins} applied to the agent dynamics above, disregarding the non-Markovianity of $\{S_n\}$. It leads to:
\begin{align}
Q_{n+1}(s,u)& = Q_n(s,u) + a(n)I\{S_n = s, U_n = u\}
\Big[r(S_n,U_n,O_{n+1})   \nonumber\\
&\;\;\;\;\;\;\;\;\;\;\;\;\; + \gamma\max_aQ_n(S_{n+1}, a)- Q_n(s,u)\Big]. \label{Qlearn}
\end{align}
Here $I\{\cdots\}$ is the `indicator' function that is $1$ if `$\cdots$' holds and $0$ otherwise. Also, $a(n)$ is a sequence of non-negative stepsizes satisfying the Robbins-Monro conditions
$$\sum_n a(n)=\infty, \; \sum_n a(n)^2<\infty.$$
For simplicity, we also assume that $Q_0(s,u)\in[0,\frac{1}{1-\gamma}]$ for all $s,u$. This inductively implies that $Q_n(s,u)\in[0,\frac{1}{1-\gamma}]$ for all $s,u$ and $n>0$\footnote{This uniform boundedness remains valid without this assumption, but with a different bound involving also the initial condition.}.  Let $q(s'\mid s,u) := P(S_{n+1} = s'\mid S_n = s, U_n  = u)$, where the absence of explicit time dependence of $q$ is due to our stationarity hypothesis. Then
\begin{align}
    Q_{n+1}(s,u)&=Q_n(s,u)+a(n)\Big(F^{s,u}(Q_n,S_n,U_n)\nonumber\\ &\;\;\;\;\;\;\;\;\;\;\;\;\;\;\;\;\;\;\;\;+\zeta^{s,u}(Q_n,O^n,U^n)+M^{s,u}_{n+1}\Big). \label{Qlearn2}
\end{align}
Here,
\begin{itemize}
\item  $F(Q_n,S_n,U_n)$ is the usual term observed in Q-learning for MDPs, with its $(s,u)$\textsuperscript{th} element given by
\begin{align*}
F^{s,u}(Q_n,S_n,U_n)&\coloneqq I\{S_n = s, U_n = u\}\Big[\bar{r}(s,u)\\
+& \gamma\sum_{s'}q(s'\mid s,u)\max_aQ_n(s', a) - Q_n(s,u)\Big],
\end{align*}

\item $\zeta(Q_n,O^n,U^n)$ is the offset due to non-Markovianity, with its $(s,u)$\textsuperscript{th} element defined as
\begin{align*}
&\zeta^{s,u}(Q_n,O^n,U^n) \\
&\coloneqq I\{S_n=s,U_n=u\}\Bigg[E[r(S_n,U_n,O_{n+1})\mid O^n,U^n]\nonumber \\
&\;\;-\bar{r}(S_n,U_n)+\gamma\bigg(\sum_{s'}P(S_{n+1} = s'\mid O^n,U^n)\max_aQ_n(s',a) \\ &\;\;\;\;\;\;\;\;\;\;\;\;-\sum_{s'}q(s'\mid S_n,U_n)\max_aQ_n(s', a)\bigg)\Bigg],  
\end{align*}
and,

\item  $M_{n+1}\in\RR^{\sa\ua}$ is a martingale difference sequence with its $(s,u)$\textsuperscript{th} element defined as
\begin{align*}
&M^{s,u}_{n+1}\coloneqq I\{S_n=s,U_n=u\}\Bigg[\bigg(r(S_n,U_n,O_{n+1}) \\
&\;\;\;\;\;\;- E[r(S_n,U_n,O_{n+1})\mid O^n,U^n]\bigg)+\gamma\bigg(\max_aQ_n(S_{n+1}, a)\\ &\;\;\;\;\;\;\;\;\;\;\;\;-\sum_{s'}\max_aQ_n(s',a)P\left(S_{n+1} = s'\mid O^n,U^n\right)\bigg)\Bigg].
\end{align*}
\end{itemize} 

{\color{black} The following are a few remarks about the map $p(\cdot\mid\cdot,\cdot)$, the state process $\{S_n\}$ and the dependence of the rewards on it.

\begin{remark}\label{R1}
    The continuity condition on the map $p( \cdot\mid\cdot,\cdot)$ is not probabilistic, i.e., not `a.s.', but a deterministic constraint on this map and is independent of our statistical hypotheses on the processes. There is, however, one important observation worth note here. Ergodicity implies that the tail $\sigma$-field at $-\infty$ given by $\mathcal{F}_{-\infty} := \cap_{n> >-\infty}\sigma(O^n,U^n)$ is trivial, i.e., the completion of $\{\phi,\Omega\}$ where $(\Omega,\F,P)$ is the underlying probability space. That is, there is no information from the infinite past. Thus in view of \eqref{agentdyn}, we may assume that $S_n = g(O^n,U^{n-1}) \ \forall n$ for some measurable $g: \OA^\infty\times\U^\infty \mapsto \SA$. Intuitively, this says that $\{S_n\}$ was initiated in infinite past with a deterministic `initialization'. The lack of a time subscript $n$ for $g$, however, is due to stationarity. It may be observed that in what follows, all conditional expectations given $(O^n,U^n)$ are functions of one or more of $S_n,U_n,O_{n+1}$ and it is only the last one  of this triplet that is being averaged, the other two being measurable w.r.t.\ $(O^n,U^n)$. For $O_{n+1}$, the conditional probability or expectation is with respect to the continuous transition kernel $p(\cdot|\cdot,\cdot)$. Typical scenarios when one may have such a condition is a well-structured continuous dependence on the past such as dependence on a finite window in the past with deterministic length or i.i.d.\ random length, or an exponentially weighted average. Partially observed Markov decision processes that are aperiodic and irreducible under stationary policies is a rich source of examples. (Periodicity for an irreducible Markov chain leads to a non-trivial $\mathcal{F}_{-\infty}$ that carries information about the `phase' of the realization of the sequence of periodic classes.)
\end{remark}
\begin{remark}
    While the reward a priori may not depend on the agent state $S_n$ which is an artificial construct, we are allowing such a dependence in case one wants to factor in `costs' such as computational or hardware overheads from the actual implementation of the agent as negative rewards into the overall reward.
\end{remark}

In the following subsection, we first show that the iterates $Q_n$ converge and then study how the limiting $Q^*$ gives us a criterion for agent design.}

\subsection{Asymptotic analysis}
Our first result states the convergence of $\{Q_n\}$.
\begin{theorem}\label{thm:convergence}
The iterates $\{Q_n\}$ from (\ref{Qlearn}) converge a.s.\ to $Q^*$ where $Q^*$ is the unique solution to the system of equations
\begin{equation}
Q(s,u)=\bar{r}(s,u)+\gamma\sum_{s'} q(s'\mid s,u)\max_aQ(s',a). \label{limit}
\end{equation}

\end{theorem}
\begin{proof}
Let $Y_m\coloneqq\{O^m,U^m\}$ for $m>0$ where $Y_m\in\Y\coloneqq\OA^\infty\times\U^\infty$. Then $\{Y_m\}$ is a Markov chain. From the definition of $S_m$ and the ergodicity assumption which in particular implies triviality of the tail $\sigma$-field of $\{Y_n\}$, we can conclude that $\{S_m,U_m\} \coloneqq \ell(O^m,U^m)$ for some measurable function $\ell$. Rewriting eq. (\ref{Qlearn2}) in vector form, we get
$$Q_{n+1}=Q_n+a(n)(F(Q_n,S_n,U_n)+\zeta(Q_n,O^n,U^n)+M_{n+1}).$$
 Also, define $h(Q_n,Y_n)\coloneqq F(Q_n,S_n,U_n)+\zeta(Q_n,O^n,U^n)$.

Now, using Corollary 8.1 from \cite{BorkarBook}, applicable here because of the compact state space, we conclude that $\{Q_n\}$ almost surely tracks the asymptotic behavior of the ODE
$$\dot{Q}(t)=\int_\Y h(Q(t),y)\pi(dy),$$
in the sense that it converges to an internally chain transitive invariant set of this ODE. (See \textit{ibid.} for a definition.) Here, recall that  $\pi(\cdot)$ is the stationary probability for the chain $\{Y_n\}$, i.e., the stationary law of $Y_n = (O^n, U^n)$.
Note that the stationary expectation of $\zeta(Q,O^n,U^n)$ is zero (i.e., $\int_\Y \pi(dy)\zeta(Q,y)=0$) for all $Q$. This implies that $\{Q_n\}$ tracks the ODE $\dot{Q}(t)=\sum_{s,u}F(Q(t),s,u)\tilde{\pi}(s,u)$. We have already assumed that $\tilde{\pi}(\cdot,\cdot)$ is bounded away from zero. Then using Corollary 12.1 from \cite{BorkarBook}, it follows that this ODE converges to $Q^*$, where $Q^*$ is the unique solution to the set of equations
$$Q(s,u)=\bar{r}(s,u)+\gamma\sum_{s'} q(s'\mid s,u)\max_aQ(s',a).$$
By standard arguments (\cite{BorkarBook}, Chapter 2, see also Chapter 12), $Q_n \to Q^*$ a.s.
This completes the proof for Theorem \ref{thm:convergence}.
\end{proof}

This result tells us that even when the controlled Markov model is an approximation, the algorithm does converge a.s.
 It also identifies the limit. We next analyse this limit in order to make it a basis for agent design. By virtue of the ergodicity assumption, we have $S_n = g(O^n,U^{n-1})$ for some $g: O^\infty\times\U^\infty \mapsto\SA$. {\color{black} Let $X_n := (O^n, U^{n-1})$. Then $\{X_n\}$ is a controlled Markov chain wherein the transition of from $X_n$ to $X_{n+1} = ((O^n, \tilde{o}), (U^{n-1}, a))$ under control $U_n = a$ is according to
\begin{eqnarray*}
\lefteqn{P(X_{n+1} = ((O^n, \tilde{o}), (U^{n-1}, a))| X_{n}, U_n = a)} \\
 &=& p(\tilde{o}| (O^n, (U^{n-1},a)).
\end{eqnarray*} 
Since we are using a fixed randomized policy 
$$\phi(U_n = \cdot \ | S_n) = \tilde{\phi}(U_n = \cdot \ | g(X_n))$$ 
for a suitably defined $\tilde{\phi}$, $\{X_n\}$ is in fact an $\OA^\infty\times\U^\infty$-valued Markov chain. Thus the sets $\{S_n = i\}$ define a partition of the state space of $\{X_n\}$, allowing us to view agent design as a state aggregation problem.}
 Hence the formalism of \cite{Singh} (see also \cite{Ben}  for related work) can be expected to apply, albeit now for an infinite state space. The exact counterpart of Theorem 1 therein is stated below in our notation. 
 \begin{proposition} $\{Q_n\}$ converge a.s.\ to the unique solution of the system of equations
\begin{align}\label{Singh1}
&Q(s,u) = \int_{\OA^\infty\times\U^\infty}P(dz\mid S_0=s,U_0=u) \ \times \nonumber \\
&\Bigg(\int_{\tilde{o}\in\OA, s'\in \SA}P(s', \tilde{o}\mid (O^0,U^0) = z)\Big(r(s,u,\tilde{o}) + \gamma\max_aQ(s',a)\Big)\Bigg).
\end{align}
\end{proposition}
This reduces precisely to the statement of Theorem \ref{thm:convergence}, as can be easily verified. Note, however, that our scheme is more than just a partitioning of $\OA^\infty\times\U^\infty$ : it is a partition that is preserved under the dynamics. Thus the agent design should aim towards minimizing the error due to such aggregation. 

We now give a result which explains the relevance of the fixed point $Q^*$ learnt using the iteration \eqref{Qlearn} and gives further motivation for the agent design. Let $\widehat{Q}$ \ be the unique solution in $C(\OA^\infty\times\U^\infty)$ (guaranteed by the standard contraction mapping argument) to the following system of equations: {\color{black}
\begin{align*}
&Q(x_n,a)=Q((o^n,u^{n-1}), a) = \\
&E[r(S_n,U_n,O_{n+1})|O^n = o^n,U^{n-1} = u^{n-1}, U_n = a] \\
    &+\gamma \int P(d\tilde{o}|(O^n,U^{n-1},a)\max_b Q(((O^n,\tilde{o}),U^{n}),b).
\end{align*}
This $\widehat{Q}$ represents the $Q$-values corresponding to the optimal policy for $\{X_n\}$ based on the complete history. } The following theorem shows that if the conditional distribution of $O_{n+1}$ given $(O^n,U^n)$ is well approximated by its conditional distribution given $(S_n,U_n)$, then $\widehat{Q}$ is well approximated by $Q^*$. 

\begin{theorem}\label{thm:approx}
    For any observation and action sequence, $O^n$ and $U^n$, $n>0$, let $d_{TV}(P( \cdot |S_n,U_n), p( \cdot |O^n,U^n))$ denote the total variation distance between the conditional distribution of $O_{n+1}$ given $(S_n,U_n)$ and $(O^n,U^n)$, respectively. If $$d_{TV}(P( \cdot |S_n,U_n), p( \cdot |O^n,U^n))\leq \varepsilon,$$ for all $n>0, O^n\in\mathcal{O}^{\infty},U^n\in\mathcal{U}^{\infty},$ then 
    $$\sup_m\sup_{O^m,U^m} \left|Q^*(S_m,U_m)-\widehat{Q}(X_m,U_m)\right|\leq \frac{\varepsilon}{(1-\gamma)^2}.$$
\end{theorem}
\begin{proof}
    For any $n>0$, observation sequence $O^n$, control sequence $U^{n}$, and corresponding state $S_n$, 
    \begin{align*}
        &\left|Q^*(S_n,U_n)-\widehat{Q}(X_n,U_n)\right|\\
        &\leq \big|\bar{r}(S_n,U_n)-E[r(S_n,U_n,O_{n+1})|O^n,U^n]\big|\\
        &\;\;+\gamma\Bigg|\sum_{s'} q(s'\mid S_n,U_n)\max_aQ^*(s',a)\\
        &\;\;\;\;\;\;\;\;-\int p(d\tilde{o}|O^n,U^n)\max_a \widehat{Q}((X_n,(\tilde{o},U_n)),a)\Bigg|.
    \end{align*}
    Note that $(X_n,(\tilde{o},U_n))$ here denotes $X_{n+1}$ for $O_{n+1}=\tilde{o}$. For the first term above, note that
    \begin{align*}
        &\big|\bar{r}(S_n,U_n)-E[r(S_n,U_n,O_{n+1})|O^n,U^n]\big|\\
        &= \big|E[r(S_n,U_n,O_{n+1})|S_n,U_n]-E[r(S_n,U_n,O_{n+1})|O^n,U^n]\big|\\
        &\leq d_{TV}(P( \cdot |S_n,U_n), p( \cdot |O^n,U^n)).
    \end{align*}
    The inequality follows from our assumption that the rewards are bounded in $[0,1]$. Simplify the second term as follows:
    \begin{align*}
        &\Bigg|\sum_{s'} q(s'\mid S_n,U_n)\max_aQ^*(s',a)\\
        &\;\;\;\;\;\;\;\;-\int p(d\tilde{o}|O^n,U^n)\max_a \widehat{Q}((X_n,(\tilde{o},U_n)),a)\Bigg|\\
        &=\Bigg|\int p(d\tilde{o}|S_n,U_n)\max_aQ^*\left(f(S_n,U_n,\tilde{o}),a\right)\\
        &\;\;\;\;\;\;\;\;-\int p(d\tilde{o}|O^n,U^n)\max_a \widehat{Q}((X_n,(\tilde{o},U_n)),a)\Bigg|\\
        &\leq\Bigg|\int p(d\tilde{o}|S_n,U_n)\max_aQ^*\left(f(S_n,U_n,\tilde{o}),a\right)\\
        &\;\;\;\;\;\;\;\;-\int p(d\tilde{o}|O^n,U^n)\max_a Q^*\left(f(S_n,U_n,\tilde{o}),a\right)\Bigg|\\
        &\;\;+\Bigg|\int p(d\tilde{o}|O^n,U^n)\max_aQ^*\left(f(S_n,U_n,\tilde{o}),a\right)\\
        &\;\;\;\;\;\;\;\;-\int p(d\tilde{o}|O^n,U^n)\max_a \widehat{Q}((X_n,(\tilde{o},U_n)),a)\Bigg|\\
        &\stackrel{(a)}{\leq} \frac{1}{1-\gamma} d_{TV}(P( \cdot |S_n,U_n), P( \cdot |O^n,U^n))\\
        &\;\;+\sup_m\sup_{O^m,U^m} \left|Q^*(S_m,U_m)-\widehat{Q}(X_m,U_m)\right|.
    \end{align*}
    The first term in inequality $(a)$ follows from the fact that $Q^*$ and $\hat{Q}$ always lies in $[0,1/(1-\gamma)]$. 
    
    This implies that for all $n$ and sequences $O^n,U^n$,
    \begin{align*}
         &\left|Q^*(S_n,U_n)-\widehat{Q}(X_n,U_n)\right|\\
         &\leq \varepsilon+ \gamma\left(\frac{\varepsilon}{1-\gamma}+\sup_m\sup_{O^m,U^m} \left|Q^*(S_m,U_m)-\widehat{Q}(X_m,U_m)\right|\right)
    \end{align*}
    Finally,
    \begin{align*}
        &\sup_m\sup_{O^m,U^m} \left|Q^*(S_m,U_m)-\widehat{Q}(X_m,U_m)\right|\\
        &\leq\frac{\varepsilon}{1-\gamma}+ \gamma \sup_m\sup_{O^m,U^m} \left|Q^*(S_m,U_m)-\widehat{Q}(X_m,U_m)\right| \\
        &\leq \frac{\varepsilon}{(1-\gamma)^2},
    \end{align*}
    which completes the proof for Theorem \ref{thm:approx}.
\end{proof}

It is clear from this analysis that the criterion for agent design should be to well approximate certain conditional expectations given the entire history till $n$, by the corresponding conditional expectations given $S_n,U_n,$ alone. Since we do not know the objects under the conditional expectation operator a priori, it makes sense to approximate the conditional laws themselves, viz., the conditional laws given the entire past by those given $S_n,U_n,$  alone. This issue is further highlighted by the concentration phenomena we study next.

\subsection{Finite-time Analysis}
We now present a concentration inequality for the iterates in (\ref{Qlearn}) and isolate the error due to non-Markovianity. Let $Z_n\coloneqq (S_n,U_n)$, where $Z_n\in\ZA\coloneqq \SA\times\U$. Then \begin{eqnarray*}
P(Z_{n+1}=(s',u')\mid Z_n=(s,u))&=&\psi((s',u')\mid (s,u))\\
&:=& q(s'\mid s,u)\phi(u'\mid s').
\end{eqnarray*}
As defined before, $\tilde{\pi}(z)$ is the stationary distribution for $z=(s,u)$. An important observation then is that $\tilde{\pi}$ is necessarily the stationary distribution for the transition probability kernel $\psi( \cdot \mid  \cdot)$. This allows us to define $V(\cdot,\cdot):[0,\frac{1}{1-\gamma}]^{\sa\ua}\times\ZA\mapsto\R^{\sa\ua}$ to be the solution of the Poisson equation:
$$V(q,z)=F(q,z)-\sum_{z'\in\ZA} \tilde{\pi}(z')F(q,z')+\sum_{z'\in\ZA}\psi(z'\mid z)V(q,z'),$$
for each $z\in\ZA$. The Poisson equation specifies $V(q,\cdot)$ uniquely only up to an additive constant. Fixing $V(q,z_0)=0$ for a fixed $z_0\in\ZA$ and for all $q$ gives us a unique solution to the set of Poisson equations given above. Henceforth $V$ refers to this unique solution of the Poisson equation. Then define 
\begin{align*}
    \omega(Q_n,O^n,U^n)&\coloneqq E[V(Q_n,Z_{n+1})\mid O^n,U^n]\\
    &\;\;\;\;\;\;\;\;\;\;-E[V(Q_n,Z_{n+1})\mid S_n,U_n].
\end{align*}

Similar to \cite{Chandak}, we make some additional assumptions about the stepsizes $a(n)$ required for the concentration bound. We assume that there exists $N>0$ such that for all $n\geq N$, $a(n+1)\leq a(n), a(n)<1$ and $\frac{d_1}{n}\leq a(n) \leq d_3 \left(\frac{1}{n}\right)^{d_2}$, where $d_1, d_3>0$ and $0.5<d_2\leq1$. 

Define $\pi_{min}=\min_{s,u}\tilde{\pi}(s,u)$. For $n\geq0$, define:
\begin{eqnarray*}
    b_k(n) &=& \sum_{m=k}^na(m), \ 0\leq k\leq n<\infty, \\
   \beta_k(n) &=& \begin{cases}
      \frac{1}{k^{d_2-d_1}n^{d_1}}, & \text{if}\ d_1\leq d_2 \\
      \frac{1}{n^{d_2}}, & \text{otherwise},
    \end{cases} \\
%    \varrho(n) &=& \sup_{m\geq n}e^{a(m)}, \\
    \chi(n,m)&=&\begin{cases}
    \prod_{k=m}^n(1-a(k)), & \text{if}\ n\geq m\\
    1, &\text{otherwise},
    \end{cases}\\
    \Delta_k(n)&=&\sum_{m=k}^{n-1}\chi(n-1,m+1)a(m)\Big(\zeta(Q_m,O^m,U^m)\\
    &&\;\;\;\;\;\;\;\;\;\;\;\;\;\;\;\;\;\;\;\;\;\;+ \ \omega(Q_m,O^m,U^m)\Big), \ n > k.
\end{eqnarray*}

We then have:
\begin{theorem}\label{thm:conc} Let $n_0\geq N$. Then there exist finite positive constants $c_1$, $c_2$, and $D$, depending on $\|Q_{N}\|$, such that for $\delta_1>0$ and $n\geq n_0$, the inequality
\begin{equation}
    \|Q_n-Q^*\|_\infty \leq e^{-(1-\alpha)b_{n_0}(n)}\|Q_{n_0}-Q^*\|_\infty + \ \frac{\delta_1+a(n_0)c_1}{1-\alpha} + \|\Delta_{n_0}(n)\|_\infty, \label{Qbound}
\end{equation}
holds with probability exceeding 
\begin{align}
&1 -  2\sa\ua\sum_{m=n_0+1}^ne^{-D\delta_1^2/\beta_{n_0}(m)} , \ 0 < \delta_1 \leq C, \label{Q_probbound0} \\
&1 -  2\sa\ua\sum_{m=n_0+1}^ne^{-D\delta_1/\beta_{n_0}(m)} , \ \delta_1 > C, \label{Q_probbound}
\end{align}
where $C=e^{\left(2\left(1+\|Q_{N}\|_\infty+\frac{1}{1-\alpha}\right)+c_2\right)}$ and $\alpha=1-(1-\gamma)\pi_{min}$.
\end{theorem}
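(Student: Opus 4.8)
The plan is to adapt the iterate decomposition of \cite{Chandak}, the one genuinely new feature being that the two non-Markovian offsets $\zeta$ and $\omega$ are carried through the recursion \emph{unbounded}, together forming the term $\Delta(n)$. Write $\F_n\df\sigma(O^m,U^m:m\le n)$, let $T$ be the sup-norm $\gamma$-contraction with $(TQ)(s,u)\df\bar r(s,u)+\gamma\sum_{s'}q(s'\mid s,u)\max_aQ(s',a)$ and fixed point $Q^*$ (Theorem~\ref{thm:convergence}), and set $\bar F(Q)\df\sum_z\tilde{\pi}(z)F(Q,z)$, so that componentwise $\bar F^{s,u}(Q)=\tilde{\pi}(s,u)\big((TQ)(s,u)-Q(s,u)\big)$. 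Splitting $F(Q_n,S_n,U_n)=\bar F(Q_n)+\big(F(Q_n,Z_n)-\bar F(Q_n)\big)$ in \eqref{Qlearn2}, the ``mean-field'' step contracts coordinatewise:
\begin{align*}
\bigl|(Q_n+a(n)\bar F(Q_n))(s,u)-Q^*(s,u)\bigr|
&\le\bigl(1-a(n)\tilde{\pi}(s,u)(1-\gamma)\bigr)\|Q_n-Q^*\|_\infty\\
&\le\bigl(1-a(n)(1-\alpha)\bigr)\|Q_n-Q^*\|_\infty,
\end{align*}
since $1-\alpha=(1-\gamma)\pi_{min}$; iterating from $n_0$ to $n$ and using $\prod_m\bigl(1-a(m)(1-\alpha)\bigr)\le e^{-(1-\alpha)b_{n_0}(n)}$ produces the leading term of \eqref{Qbound}.

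For the Markovian fluctuation $F(Q_n,Z_n)-\bar F(Q_n)$ I would invoke the Poisson equation defining $V$: it equals $V(Q_n,Z_n)-\sum_{z'}\psi(z'\mid Z_n)V(Q_n,z')=V(Q_n,Z_n)-E[V(Q_n,Z_{n+1})\mid S_n,U_n]$, and inserting the definition of $\omega$, $=V(Q_n,Z_n)-E[V(Q_n,Z_{n+1})\mid O^n,U^n]+\omega(Q_n,O^n,U^n)$. (One first checks that $V(q,\cdot)$ is well defined --- $\psi$ is an irreducible kernel on the finite set $\ZA$ --- and Lipschitz in $q$ uniformly in $z$.) Then
\begin{align*}
V(Q_n,Z_n)-E[V(Q_n,Z_{n+1})\mid O^n,U^n]
&=\bigl(V(Q_n,Z_n)-V(Q_{n+1},Z_{n+1})\bigr)\\
&\quad+\bigl(V(Q_{n+1},Z_{n+1})-V(Q_n,Z_{n+1})\bigr)+\widetilde M_{n+1},
\end{align*}
where $\widetilde M_{n+1}\df V(Q_n,Z_{n+1})-E[V(Q_n,Z_{n+1})\mid O^n,U^n]$ is an $\F_n$-martingale difference, the first bracket telescopes after multiplication by the step-sizes (Abel summation, using that $a(n)$ is eventually non-increasing, $\sum_na(n)^2<\infty$ and $\|V\|_\infty<\infty$), and the second is $O(a(n))$ by the Lipschitz property. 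Collecting, \eqref{Qlearn2} takes the form $Q_{n+1}=Q_n+a(n)\bigl(\bar F(Q_n)+(\text{telescope})+O(a(n))+\widetilde M_{n+1}+M_{n+1}+\zeta(Q_n,O^n,U^n)+\omega(Q_n,O^n,U^n)\bigr)$, with $\widetilde M_{n+1},M_{n+1}$ martingale differences bounded by a constant depending only on $\|Q_N\|_\infty$.

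Unrolling this from $n_0$ to $n$ in $\|\cdot\|_\infty$: the $\bar F$-part is handled by the contraction above; the telescoping and $O(a(n))$ terms are absorbed into $\tfrac{a(n_0)c_1}{1-\alpha}$; the $\zeta$- and $\omega$-contributions assemble, with the appropriate discounting, into the term $\Delta(n)$ of the statement, which is deliberately left unbounded --- this is ``the curse of non-Markovianity''; and the surviving martingale sum $\sum_{m}\chi(n-1,m+1)a(m)\bigl(\widetilde M_{m+1}+M_{m+1}\bigr)$ is what carries the probability. I would bound the latter coordinatewise by a maximal Azuma--Hoeffding inequality, the relevant sum of squared coefficients being $O(\beta_{n_0}(n))$ by the step-size assumptions, and then union-bound over the $\sa\ua$ coordinates and over the possible ``first failure times'' $m\in\{n_0+1,\dots,n\}$ --- the latter because the increment bounds are \emph{self-referential} through $\|Q_m-Q^*\|_\infty$ and must be propagated forward by induction. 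This bootstrapping is the main obstacle, and it is exactly what creates the two regimes: when $\delta_1\le C$ the inductive hypothesis keeps the martingale increments of size $O(a(m)\delta_1)$, giving the sub-Gaussian tails $e^{-D\delta_1^2/\beta_{n_0}(m)}$ of \eqref{Q_probbound0}; when $\delta_1>C$ only the crude deterministic bound $\|Q_m-Q^*\|_\infty\lesssim\|Q_N\|_\infty+(1-\alpha)^{-1}$ is available for the increments, giving the weaker sub-exponential tails $e^{-D\delta_1/\beta_{n_0}(m)}$ of \eqref{Q_probbound}. The remaining work --- tracking how $\|Q_N\|_\infty$ enters $c_1,c_2,D,C$ and verifying the elementary step-size estimates $\prod_m(1-a(m)(1-\alpha))\le e^{-(1-\alpha)b_{n_0}(n)}$ and $\sum_m\bigl(\chi(n-1,m+1)a(m)\bigr)^2=O(\beta_{n_0}(n))$ --- is routine.
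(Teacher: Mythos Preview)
Your proposal is correct and follows essentially the same route as the paper: the paper rewrites the iterate via $G=F+Q$, introduces the reference sequence $R_n$ driven by $\sum_z\tilde\pi(z)G(\cdot,z)$ (your ``mean-field'' contraction), uses the Poisson equation to split $F(Q_n,Z_n)-\bar F(Q_n)$ exactly as you do, and then defers the telescoping/Lipschitz/martingale bounds and the two-regime bootstrapping to \cite{Chandak}, leaving $\sum_k\chi(n,k+1)a(k)(\zeta+\omega)$ as $\Delta(n)$. Your sketch simply unpacks what the paper black-boxes from \cite{Chandak}; the only cosmetic difference is that the paper carries the comparison sequence $R_n$ explicitly rather than contracting $Q_n-Q^*$ directly.
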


\begin{proof}[Proof Sketch] The proof for Theorem \ref{thm:conc} follows similarly to the proof of Corollary 2 from \cite{Chandak}. For this, we first rewrite \eqref{Qlearn2} in the form of  \citep[eqn.\ (40)]{Chandak}:
\begin{equation*}
    Q_{n+1}=Q_n+a(n)(G(Q_n,S_n,U_n)-Q_n+M_{n+1}+\zeta(Q_n,O^n,U^n)),
\end{equation*}
where $G(Q_n,S_n,U_n)=F(Q_n,S_n,U_n)+Q_n$ and $\zeta(Q_n,O^n,U^n)$ is an additional term due to non-Markovianity. Note that $G$ satisfies
$$\|\sum_{z\in\ZA}\tilde{\pi}(z)(G(q,z)-G(q',z))\|_\infty\leq \alpha \|z-z'\|_\infty,$$ 
where $\alpha=1-(1-\gamma)\pi_{min}$ (see \cite{Chandak}, eqn.\ (41)). This implies that $\sum_z\tilde{\pi}(z)G(\cdot,z)$ is a contractive mapping with contraction factor $\alpha$. It can be easily shown that its unique fixed point is $Q^*$. 

Now define $R_n$ for $n\geq n_0$ by $$R_{n+1}=R_n+a(n)\left(\sum_{z\in\ZA}\tilde{\pi}(z)G(R_n,z)-R_n\right),$$ 
where $R_{n_0}=Q_{n_0}$. Note that 
$$\|Q_n-Q^*\|_\infty\leq \|Q_n-R_n\|_\infty+\|R_n-Q^*\|_\infty.$$
The term $\|R_n-Q^*\|$ is bounded exactly as in \cite{Chandak}, and hence is not repeated here. The difference comes in the term $\|Q_n-R_n\|_\infty$, which is simplified as follows.
\begin{align}
Q_{n+1}-R_{n+1}&=(1-a(n))(Q_n-R_n)+a(n)(\zeta(Q_n,Y_n)+M_{n+1})\nonumber\\
&\;\;\;+a(n)\left(G(Q_n,Z_n)-\sum_{z\in\ZA}\tilde{\pi}(z)G(R_n,z)\right)\nonumber\\
&=(1-a(n))(Q_n-R_n)+a(n)(\zeta(Q_n,Y_n)+M_{n+1})\nonumber\\
&\;\;\;+a(n)\left(\sum_{z\in\ZA}\tilde{\pi}(z)(G(Q_n,z)-G(R_n,z))\right)\nonumber\\
&\;\;\;+ a(n)\left(G(Q_n,Z_n)-\sum_{z\in\ZA}\tilde{\pi}(z)G(Q_n,z)\right). \label{simpli-1}
\end{align}

To simplify the last term in the above equation, we use $V$, the solution of the Poisson equation defined above. Note that 
\begin{align}
&G(Q_n,Z_n)-\sum_{z\in\ZA}\tilde{\pi}(z)G(Q_n,z)\nonumber\\
&=F(Q_n,Z_n)-\sum_{z\in\ZA}\tilde{\pi}(z)F(Q_n,z)\nonumber\\
&=V(Q_n,Z_n)-\sum_{z'\in\ZA}\psi(z'\mid Z_n)V(Q_n,z')\nonumber\\
&= V(Q_n,Z_n) - E[V(Q_n,Z_{n+1})\mid Y_n]\nonumber\\
&\;\;\;\;\;\;\;\;\;\;+E[V(Q_n,Z_{n+1})\mid Y_n]-E[V(Q_n,Z_{n+1})\mid Z_n]. \label{simpli-2}
\end{align}

Here the second equality follows from the Poisson equation. For some $n\geq n_0$, iterating (\ref{simpli-1}) for $n_0\leq m\leq n$ and substituting (\ref{simpli-2}) gives us
\begin{align}
&Q_{n+1}-R_{n+1}=\sum_{k=n_0}^n \chi(n,k+1)a(k)M_{k+1} \nonumber\\
&\;\;\;\;\;\;\;\;\;\;+ \sum_{k=n_0}^n\chi(n,k+1)a(k)\left(\sum_{z\in\ZA}\tilde{\pi}(z)(G(Q_k,z)-G(R_k,z))\right) \nonumber\\
&\;\;\;\;\;\;\;\;\;\;+ \sum_{k=n_0}^n\chi(n,k+1)a(k)\left(V(Q_k,Z_k) - E[V(Q_k,Z_{k+1})\mid Y_k]\right) \nonumber\\
&\;\;\;\;\;\;\;\;\;\;+ \sum_{k=n_0}^n\chi(n,k+1)a(k) \left(\zeta(Q_k,Y_k)+\omega(Q_k,Y_k)\right).
\end{align}
The first three terms are bounded exactly as in Theorem 1 of \cite{Chandak}, so we do not repeat it here. The last term is the total error due to non-Markovianity and is equal to $\Delta_{n_0}(n+1)$. This completes the proof for Theorem \ref{thm:conc}.
\end{proof}

\begin{remark}
The constants $c_1$ and $c_2$ depend on $\|Q_N\|$  which in turn has a bound depending on $\|Q_0\|$ that can be derived easily using the discrete Gronwall inequality under our assumptions. Also note that if $a(n)$ are decreasing and $a(0) < 1$, then we can take $N=0$. In that case, we directly have an all-time bound using $n_0=0$. Calculations in \cite{Chandak} show that $c_1$ depends on quantities that depend on the mean hitting time of a fixed state, which is related to the mixing time of the Markov chain formed by the transition probabilities $\psi(s',u'\mid s,u)$. As pointed out in \textit{ibid.,} Section 4, it is also possible to stitch our results with known finite time bounds to get an `all-time' bound.
\end{remark} 

{\color{black} In the above result, all but the term $\Delta_{n_0}(n)$ are exactly as in the Markov case, therefore this is the term that needs to be kept small by a judicious choice of the agent state and dynamics in (\ref{state-eq}). Observe also that both $\zeta_n(\cdot,\cdot)$ and $\omega_n(\cdot,\cdot)$ are bounded uniformly in $n$. Hence for a suitable constant $\check{K} > 0$,
$$\|\Delta_{n_0}(n)\|_\infty \leq  \check{K}\sum_{m=n_0}^{n-1}e^{-\sum_{k=m+1}^na(k)}a(m) <\infty.$$
Also, it is a zero mean stationary sequence.  Then under mild conditions, it can be shown to have no effect on the a.s.\ convergence of the algorithm (\cite{Kush}, Chapter 6). Unfortunately, a good concentration bound for this class of `noise' seems unavailable and unlikely without some additional mixing conditions. }

%Under certain assumptions on the observation process, a concentration bound can be obtained and is presented in \ref{appendix-bound}. 
Our objective is  to use these observations to motivate a particular philosophy for agent design. In the same spirit, the stationarity assumption (which could be relaxed to asymptotic stationarity) is also a part of this motivational exercise. We introduce this performance criterion and the ensuing notion of \textit{Recursively Computable Approximate Sufficient Statistics} (RCASS) in the next section.

\section{Recursively Computable  Approximate Sufficient Statistics}

In view of Theorem \ref{thm:approx}, Theorem \ref{thm:conc} and the discussion that follows, a legitimate criterion for agent design is that we need the conditional distribution of $(S_{n+1}, O_{n+1})$ given $(O^n, U^n)$ to be well approximated by its conditional distribution given $(S_n, U_n)$ alone. In other words, we need $S_n$ to be an `approximate sufficient statistics'. We shall call a sequence $\{Y_n\}$ taking values in a finite set $\SA_1$ to be a \textit{Recursively Computable Approximate Sufficient Statistics} (RCASS) if it is of the form $Y_n = g_1(\Gamma_n)$ for some function $g_1 : \SA_2 \mapsto \SA_1$ for some finite set $\SA_2$, where $\{\Gamma_n\} \subset \SA_2$ is given by a recursion of the  form
\begin{equation}
\Gamma_{n+1} = h_1(\Gamma_n, U_n, O_{n+1}), \ n \geq 0, \label{Rcass}
\end{equation}
for some $h_1: \SA_2\times\U\times\OA\mapsto \SA_2$. Clearly, $\{S_n\}$ above is a RCASS where $\SA_1 = \SA_2 = \SA$ and $g_1$ is the identity map. That $\SA_1, \SA_2$ be finite is our choice. In any case, note that $Y_n$ is a function of $(O^n,U^n)$ and therefore any conditional probability $\tilde{p}(O_{n+1} = o \mid  Y_n, U_n)$ can also be written alternatively as $\breve{p}(O_{n+1} = o \mid  O^n,U^n)$ for a suitable $\breve{p}( \cdot \mid  \cdot )$. With this backdrop, we have the following result.\\

\begin{lemma}\label{four} The transition probability function $p( \cdot  \mid  \cdot ) : \OA^\infty\times\U^\infty \mapsto \PA(\OA)$ can be uniformly approximated by a suitable choice of $\SA_1$ and  $\tilde{p}( \cdot  \mid \cdot ) : \SA_1\times\U \mapsto \PA(\OA)$ in the sense that the error $\|p( \cdot \mid  \cdot ) - \breve{p}( \cdot  \mid  \cdot )\|$ can be made smaller than any prescribed $\epsilon > 0$ by a suitable choice of $\tilde{p}$, equivalently, of $\breve{p}$, which is related to $\tilde{p}$ as described above.
\end{lemma}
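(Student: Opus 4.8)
The plan is to exploit the fact that $p(\cdot\mid o,u)$ is a continuous map on the compact space $\OA^\infty\times\U^\infty$ (compact because $\OA,\U$ are finite and the product topology on a countable product of finite sets is compact metrizable). A continuous function on a compact metric space is uniformly continuous, so for the prescribed $\epsilon>0$ there is a $\delta>0$ and, by total boundedness, a finite open cover of $\OA^\infty\times\U^\infty$ by sets of diameter less than $\delta$ on each of which $p$ varies by less than $\epsilon$ in total variation. Because the metric on the product is dominated by agreement on a long enough finite prefix, one can in fact take the cover to be generated by fixing the first $k$ coordinates $(o_n,\dots,o_{n-k+1},u_n,\dots,u_{n-k+1})$ for some finite $k=k(\epsilon)$: any two histories agreeing on their most recent $k$ symbols are within $\delta$, hence have $p$-values within $\epsilon$.

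The key step is then to recognize that ``the last $k$ observation--action symbols'' is itself a RCASS. Take $\SA_2=\SA_1=(\OA\times\U)^{k}$ (finite), let $\Gamma_n=(O_n,U_{n-1},\dots,O_{n-k+1},U_{n-k})$ — or whatever finite window of recent symbols is convenient — updated by the obvious shift recursion $\Gamma_{n+1}=h_1(\Gamma_n,U_n,O_{n+1})$ that drops the oldest pair and appends $(U_n,O_{n+1})$, and let $g_1$ be the identity. Define $\tilde p(\cdot\mid Y_n,U_n)$ by picking, for each value of the window and each $u$, a representative infinite history $(o^\star,u^\star)$ consistent with that window and that current action, and setting $\tilde p(\cdot\mid y,u):=p(\cdot\mid o^\star,u^\star)$. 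By construction the induced $\breve p(O_{n+1}=o\mid O^n,U^n)$ depends on $(O^n,U^n)$ only through the last $k$ symbols, and it agrees with $p$ up to $\epsilon$ uniformly, since the true $(O^n,U^n)$ and the representative $(o^\star,u^\star)$ share those $k$ symbols and hence lie in the same cover element. This gives $\|p(\cdot\mid\cdot)-\breve p(\cdot\mid\cdot)\|\le\epsilon$, which is the claim.

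The main thing to get right is the reduction from ``small diameter in the product metric'' to ``agreement on a finite prefix/window,'' i.e., quantifying how $k$ must grow with $\epsilon$; this is where uniform continuity is genuinely used and is the only non-formal point. Everything else — compactness of $(\OA\times\U)^\infty$, finiteness of $(\OA\times\U)^k$, the shift recursion realizing the window as a bona fide RCASS of the form \eqref{Rcass}, and the translation between $\tilde p$ and $\breve p$ already set up in the paragraph preceding the lemma — is routine. One should also note explicitly that the approximating $\tilde p$ need not come from a causal filter in any deeper sense: the finite-window statistic suffices for Lemma~\ref{four}, and the stronger, more economical RCASS constructions (nonlinear-filter-type states) are what the subsequent discussion is about, not what is needed here.
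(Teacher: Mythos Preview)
Your proposal is correct and follows essentially the same route as the paper: both take the RCASS to be a finite window of recent observation--action symbols with the shift recursion. The paper's proof is terser, invoking the Stone--Weierstrass theorem to approximate each $p(i\mid\cdot)$ by a function depending on only finitely many coordinates and then projecting the resulting (not necessarily probability-valued) vector back onto the simplex. Your argument via uniform continuity on the compact product space $(\OA\times\U)^\infty$ is slightly more elementary and has the minor advantage that defining $\tilde p$ by evaluating $p$ at a representative history automatically produces a genuine probability measure, so no projection step is needed. Both arguments rest on the same underlying fact---continuity of $p$ on a compact totally disconnected product of finite sets forces approximate dependence on only finitely many coordinates---and arrive at the identical finite-window construction.
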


\begin{proof} If we drop the requirement that $\breve{p}(i \mid  \cdot)$ be a probability and require only that it be a continuous function, then the result is immediate from the Stone-Weierstrass theorem for $S_1 = S_2  := \OA^N$ and $Y_n = \Gamma_n := [(O_n,U_{n-1}), \cdots, (O_{n-N},U_{n-N-1})]$ with $N$ large enough. The resulting vector can then be projected to $\PA(\SA)\times(\SA\times\U)^N$ to get the desired claim.
\end{proof}

This justifies restricting our attention to RCASS. Since $\sigma(Y_m, m \leq n) \subset \sigma(Z_m, m \leq n) \ \forall n$, we might as well take $g_1$ to be the identity map as above. This justifies our model for agent dynamics given by (\ref{state-eq}). 
It is instructive here to draw a parallel with the theory of POMDPs in classical stochastic control \citep{Kumar}. To make the parallels apparent, we shall repeat the same notation. In a POMDP, one controls a (controlled) Markov chain $\{X_n\}$  on a finite state space $\SA_3$ given a process of observations $\{O_n\}$ taking values in another finite space, with a controlled transition kernel (say) $p'(s',o\mid s,u), s, s' \in \SA,$ that satisfies $\forall n$,
\begin{equation}
P(X_{n+1} = s', O_{n+1} = o \mid  X^n,U^n,O^n) = p'(s', o\mid  S_n,U_n). \label{POMDP}
\end{equation}
The control $U_n$ is allowed to depend only upon $O^n, U^{n-1}$, and independent extraneous randomization. Letting $\pi_n(\cdot)$ denote the conditional law of $S_n$ given $(O^n,U^n)$, the process $\{\pi_n\}$ of belief states is given recursively by the nonlinear filter
\begin{equation}
\pi_{n+1}(i) = \frac{\sum_j\pi_n(j)p'(i, O_{n+1}\mid  j, U_n)}{\sum_{j,k}\pi_n(j)p'(k, O_{n+1}\mid  j, U_n)}, \ \forall n. \label{NLF}
\end{equation}
This is an easy consequence of the Bayes rule. Furthermore, the process $\{\pi_n\}$ is a $\PA(\SA)$-valued controlled Markov chain. The equation (\ref{NLF}) has the same form as (\ref{state-eq}), except that $\pi_n$ resides in the probability simplex $\PA(\SA)$, which is not a finite set. Thus replacing it with (\ref{state-eq}) amounts to a finite state approximation to (\ref{NLF}). A further and critical  difference in our case is that (\ref{POMDP}) does not hold in our case. Therefore $\{\pi_n\}$ is \textit{not} a controlled Markov chain. This is the crux of the \textit{curse of non-Markovianity}.

{\color{black} Note that $\{O_n\}$ is trivially a POMDP if we view $X_n := (O^n,U^{n-1})$ as a controlled Markov chain in state space $\SA_2 := \OA^\infty\times\U^\infty$, with control process $\{U_n\}$. The transition at time $n$ given the control choice $U_n=u$ causes the chain to move to the next state $(O^{n+1} = (O^n,O_{n+1}=o), U^n = (U^{n-1},U_n=u))$ with 
\begin{eqnarray*}
\lefteqn{P\left(O^{n+1} = (O^n,u), U^n = (U^{n-1},u)| (O^n,U^{n-1}), U_n = u\right)} \\
 &=& p\left( \ o \ |O^n,(U^{n-1},u)\right)
\end{eqnarray*}
for $p( \cdot \ | \cdot, \cdot)$ as in \eqref{kernel}. The associated observation process $O_n = g'(X_n)$ where 
	$$g' : ((o_n, u_{n-1}), (o_{n-1}, u_{n-2}), \cdots) \in (\OA\times\U)^\infty \mapsto o_n \in \OA.$$}
The agent design task is to develop an approximate \textit{finite state} model of the original system as above, which in this case is the controlled nonlinear filter in view of the `separated control' formalism. This is reminiscent of the \textit{internal model principle} in adaptive control \citep{Francis} (see \cite{Bin} for a recent survey). An informal statement of this principle says: `a good controller incorporates a model of the dynamics that generate the signals which the control system is intended to track'. It should also be mentioned that a natural choice for this purpose that comes to mind is approximating $(O^n, U^{n-1})$ by a finite window $[O_{n-K}, U_{n-K}, \cdots , O_{n-1}, U_{n-1}, O_n]$. (See the proof of Lemma \ref{four} above.) This can be viewed as a special case of the framework described above. This situation has been extensively analyzed in \cite{Kara} and a Q-learning algorithm is proposed and analyzed in this framework therein. Another important work in this spirit is \cite{Aditya}, which postulates an \textit{approximate information state} that corresponds to our RCASS, by stating a priori desirable properties thereof that facilitate approximate dynamic programming. In contrast, we are motivated by the error decomposition in the concentration bound and take a more statistical viewpoint of seeking approximate sufficient statistics to mimic certain conditional laws.
Whereas we justified the need to well approximate the conditional laws using agent state in terms of the transients of the Q-learning algorithm, we can also motivate it by looking at the asymptotic behavior as characterized in (\ref{Singh1}).

{\color{black} Our input to the learning scheme is a stationary process controlled through a randomized control policy that chooses at time $n$ a control $U_n$ in $\U$ according to a probability distribution on $\U$  that depends only on the current agent state $S_n$. As argued earlier, the agent state $S_n$ will be a fixed function of the state $(O^n,U^{n-1})$ of the underlying $\OA^\infty\times\U^\infty$-valued controlled Markov chain, therefore so will be the conditional law of $U_n$ given  $(O^n,U^{n-1})$. Hence this is a stationary randomized policy for the underlying controlled Markov chain $(O^n,U^{n-1}), n > -\infty$. This, however, decides only the sampling distribution of state-action pairs for the Q-learning algorithm. (Note that the learning scheme is `off-line'.)  As shown in Theorem 1- Proposition 2, the `optimal choice' of the algorithm obtained from the Q-values learnt, will be a fixed function of $S_n$ at time $n$ and again by the above argument, it corresponds to a stationary policy. Since the state and action spaces of this $\OA^\infty\times\U^\infty$-valued chain are compact and the transition mechanism is continuous in state and action variables, standard dynamic programming arguments guarantee the sufficiency of stationary policies. But since we are working with an approximate sufficient statistics, the best one can hope for is a near-optimal stationary or stationary randomized policy. Since the agent state and its dynamics are only approximations of the nonlinear filter, one cannot in general expect this class to contain an optimal policy for the original problem, but in view of the interpretation given in Proposition 2, it will be near-optimal if the state aggregation indicated in Proposition 2 is fine enough. Thus there will be a trade-off between the complexity of the agent dynamics which we design, and the optimality gap.
}

{\color{black}\section{Autoencoder-based Computation Scheme}
We now propose a specific architecture inspired by the autoencoder model from \cite{Champion}. However, unlike their model, where one tries to match the output with the input, we match the output with the \textit{next} input, i.e., our error is not to be viewed as a decoding error, but as a prediction error. Our architecture has a `state' neural network that takes $S_{n-1},U_{n-1},O_{n}$ as input and gives the next agent state $S_n$ as output. This feeds into the `observation' neural network that takes this $S_n$ along with $U_n$ as input and outputs a prediction $\widehat{O}_{n+1}$ of the next observation $O_{n+1}$ as the output (see Figure \ref{fig:arch}). The neural networks are trained by comparing $O_{n+1}, \widehat{O}_{n+1}$. The criterion for this comparison can be chosen depending on the problem specifics. For example, it could be one of the classical loss functions penalizing the difference between  $O_{n+1}$ and $\widehat{O}_{n+1}$, or it may compare their distributions or conditional distributions (see, e.g., \cite{Osband}). See section 3.1 of \cite{Aditya} for some popular metrics for comparing probability measures. There are also other architectures that have been proposed with alternative underlying philosophies, e.g., \cite{Igl}. 

\begin{figure}[ht!]
    \centering
    \includegraphics[width=0.8\linewidth]{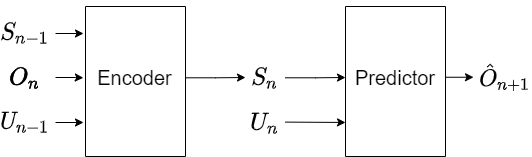}
    \caption{Learning the agent state dynamics using autoencoders}
    \label{fig:arch}
\end{figure}

% One expects that the more elaborate the model, the smaller the error. Thus any comparison among alternative approaches should begin with a consensus measure of the complexity of any scheme in terms of computation, memory, etc., and compare the performance for a fixed budget thereof.

\subsection{Numerical Experiments}
For numerical experiments, we use the above architecture with the squared $\ell_2$ distance between $O_{n+1}$ and $\widehat{O}_{n+1}$ as the loss. The actions $U_n$ are determined using a Deep Q-Network (DQN) \cite{dqn} which takes state $S_n$ as the input. We use experience replay and target network for efficient training of DQN \cite{dqn}. For training the combined system of autoencoder and DQN, we take an alternating training approach, i.e., we first fix the DQN and train the autoencoder for a few epochs and then train the DQN for a few epochs while keeping the autoencoder fixed. This alternating training is then repeated. This is necessary as DQN takes the output of autoencoder as input and vice versa, and hence training them together in a single epoch creates additional fluctuations. We call our agent (trained in the above described manner) as the Non-Markovian Q-Agent (NMQ Agent).  

We test our agent on partially-observed versions of two standard environments - Cartpole \cite{popgym} and Mountain Car \cite{sutton} and on a controlled Non-Markovian random walk. 

\begin{figure*}[!h]
\centering
\begin{subfigure}{.4\textwidth}
  \centering
  \includegraphics[width=0.99\linewidth]{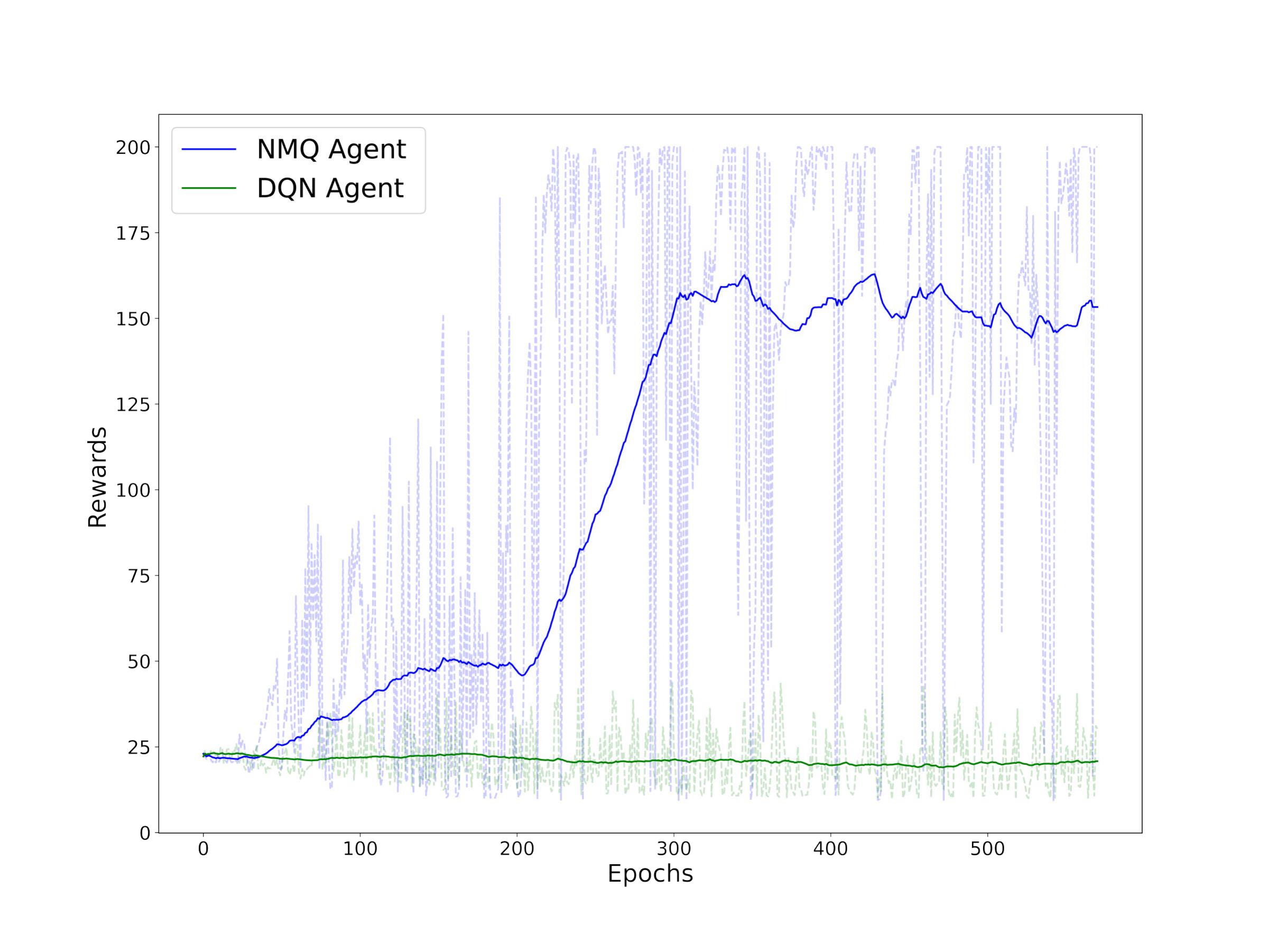}
  \caption{Cartpole}
  \label{fig:cartpole}
\end{subfigure}%
\begin{subfigure}{.4\textwidth}
  \centering
  \includegraphics[width=.99\linewidth]{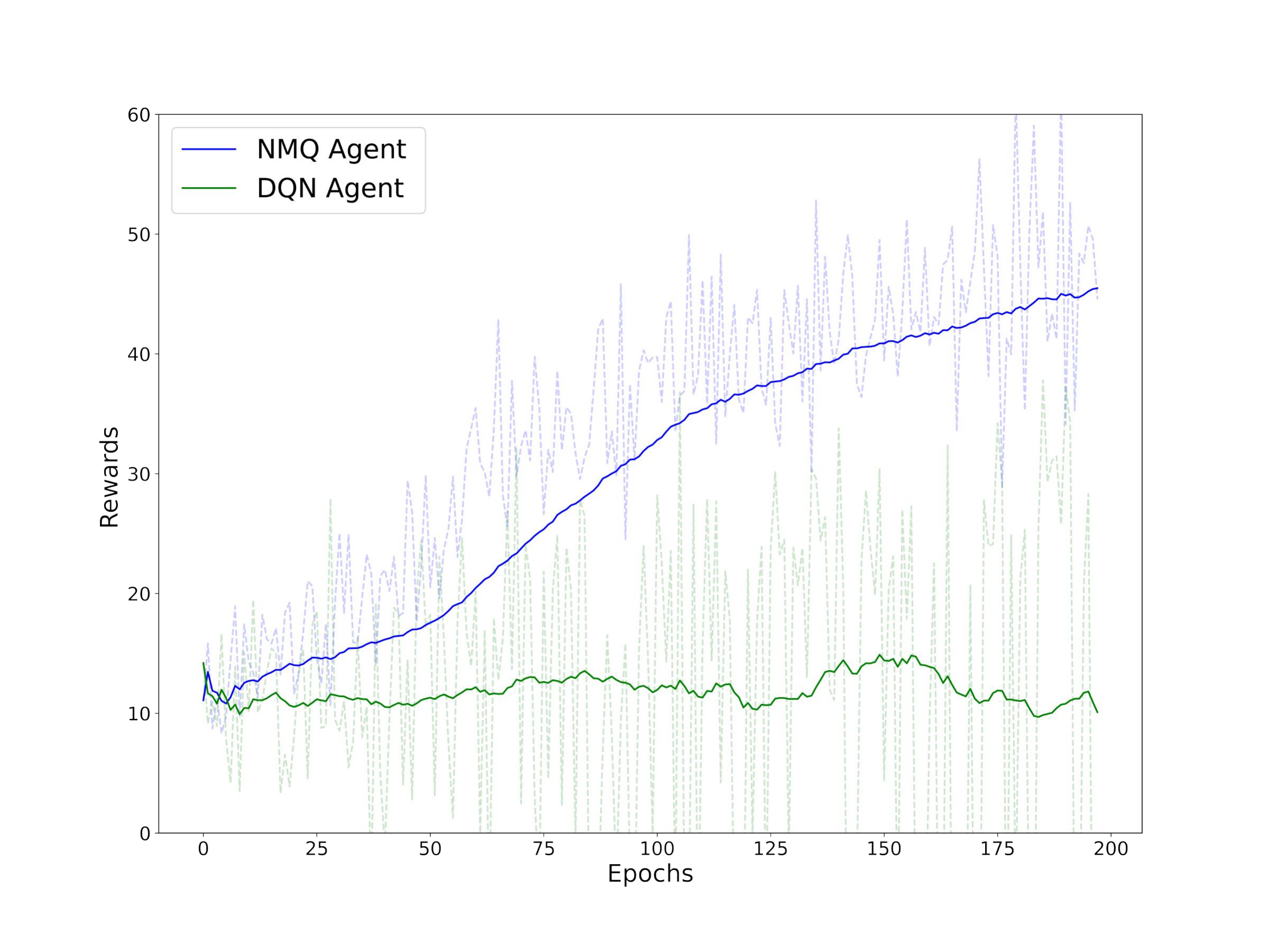}
  \caption{Mountain Car}
  \label{fig:mountain}
\end{subfigure}
\begin{subfigure}{.4\textwidth}
  \centering
  \includegraphics[width=.99\linewidth]{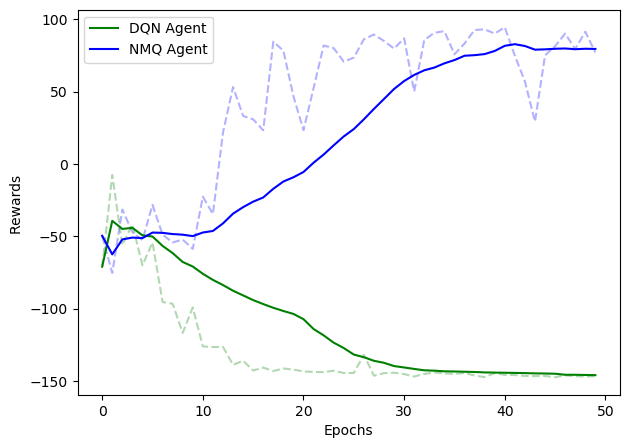}
  \caption{Non-Markovian Random Walk}
  \label{fig:walk}
\end{subfigure}
\caption{Moving average of episodic reward for a single run of Non-Markovian Q-agent and the DQN-agent on the environments (a) cartpole, (b) mountain car and (c) Non-Markovian random walk. The dotted lines represent the reward obtained in each episode.}
\label{fig:result}
\end{figure*} 

\begin{itemize}
    \item \textbf{Cartpole:} Cartpole is the simulation of a cart with an upside-down pole where the agent has to maintain the vertical angle of the pole sufficiently small while remaining in a restricted horizontal displacement region. The system state consists of $4$ dimensions: horizontal displacement and velocity, and angular displacement and velocity. The agent is just able to observe the displacements (horizontal and angular). The agent's action space consists of two actions: applying force on left or right. The episode runs for a maximum of 200 timesteps or stops when either of the horizontal or angular displacement constraints are violated and the agent receives reward $1$ at each timestep until the episode ends. The environment is said to be solved if the agent achieves a reward of $200$ (the maximum reward).
    \item \textbf{Mountain Car:} This environment simulates an under-powered car which must drive up a steep hill. The car is situated in a valley and must learn to drive up the opposite hill before it can make it to the top of the desired hill. The state space consists of velocity and position, and in our partially observed version, the agent can just observe the position. The action space consists of 3 actions: force towards left or right or no force. We use a common variation for the reward function where the reward depends on the distance of the car from the hill, increasing as the car reaches close to the top of the hill, and also on the absolute velocity of the car.
    {\color{black} \item \textbf{Non-Markovian Random Walk:} This environment simulates a controlled random walk where the next position depends not only on the current position and action but also on the past positions. Here the position at each time $n$ is the observation $O_n$ and the actions $U_n$ can be in the set $\{-5,-4,\ldots,4,5\}$. Let $sum_n$ denote the sum of past positions, i.e., $sum_n=\sum_{m=0}^n O_m$. Then the next observation is given by
    $O_{n+1}=O_n+(sum_n\%10)-5+U_n,$
    where $\%$ denotes the modulus operator. The agent is initialized at a random position between $-100$ and $100$ and receives high positive reward of 100 on reaching $0$, and slight negative reward of $-|O_n|/100$ otherwise. The episode ends either when the agent reaches $0$ or after 150 timesteps, if the agent fails to reach $0$ before then.}
\end{itemize}
Figure \ref{fig:result} compares the performance of our NMQ agent with a DQN agent where the observations are directly given as inputs to the Deep Q-Network. For all experiments, the plots are for a single run of the environment. We plot the moving average of the episodic rewards and the light dotted line represents the reward in each episode. Results show that the NMQ agent is able to `learn' and solve the environments, while it seems that the DQN fails to learn anything significant. These are plots for a single run and the fluctuations in episodic rewards are due to the partially-observed setting \cite{popgym}.

}
\section{Future Directions}

\subsection{The assumption of stationarity}

One of the limitations of the above analysis has been the assumption of stationarity. The key consequence of it for us was to justify the time-homogeneous transition probabilities $q( \cdot \mid  \cdot, \cdot)$. As already observed, asymptotic stationarity suffices.  In the  absence of this assumption, the aforementioned time-homogeneity itself will be an \textit{assumption}. The justification for using the Poisson equation as stated in our analysis is also no longer valid without this additional assumption. In general, analyzing reinforcement learning in non-stationary environments is a major open problem, except perhaps when the environment is slowly varying and therefore quasi-stationary, whence it becomes a tracking issue. See \cite{Yash} for some recent work in this direction and \cite{Cao} for analysis of nonstationary stochastic control problems. 

\subsection{RKHS-based Computation scheme}
 Another method for approximating agent dynamics given by $P(S_n\mid O^n)$, is to use Hilbert space embeddings of conditional distributions \citep{Song_RKHS_main, Song_RKHS_survey, RKHS_review}. This involves representing probability distributions by elements in a reproducing kernel Hilbert space (RKHS). First, we give a brief overview of how distributions can be embedded into an RKHS. An RKHS $\F$ with kernel $k$ on $\mathcal{X}$ is a Hilbert space of functions $f:\mathcal{X}\mapsto \mathbb{R}$ with inner product $\langle\cdot,\cdot\rangle_\F$ that satisfies the property $\langle f(\cdot),k(x,\cdot)\rangle_\F=f(x)$. Examples of such kernel functions include polynomial kernels and Gaussian kernels. These kernels are associated with feature maps $\phi : \X \mapsto \F$ such that $k(x,x')=\langle \phi(x),\phi(x')\rangle_\F$. Once we have a fixed feature map, for distribution $P(X)$ over $\mathcal{X}$, we can define the embedding ``mean map" $\mu_X\coloneqq E_X[\phi(X)]$. Similarly, we can define the empirical mean map $\hat{\mu}_X $ which converges with high probability to $\mu_X$ in the RKHS norm at a rate of $\mathcal{O}_p(1/\sqrt{m})$, where $m$ denotes the number of samples. This mean map $\mu_X$ can be used to represent a distribution and under certain assumptions on the kernel $k$, we can guarantee that distinct distributions map to distinct mean maps in RKHS. 

Now, to represent conditional distributions as embeddings, we use the cross-covariance operator $C_{XY}\coloneqq E_{XY}[\phi(X)\otimes\varphi(Y)]-\mu_X\otimes\mu_Y,$
where $\otimes$ denotes the tensor product and $\phi(\cdot)$ and $\varphi(\cdot)$ are feature maps on $\mathcal{X}$ and $\mathcal{Y}$, respectively. Then the conditional map operator $\U_{Y\mid X}$ is defined as $\U_{Y\mid X}=C_{YX}C_{XX}^{-1}$ and the conditional mapping given $X=x$ is given by $\mu_{Y\mid x}=\U_{Y\mid X}\phi(x)$. The cross covariance operators can be empirically estimated recursively.

We can model our problem of approximating $P(S_n\mid H_n)$ as a hidden Markov model, where $\{S_n\}$ are modeled as states from a Markov chain (not observed) and the observations $O_n$ satisfy $P(O_n\in\mathcal{A}\mid O^{n-1},S^n)=P(O_n\in\mathcal{A}\mid S_n)$. Under this model, the conditional distribution of $(S_{n+1},O_{n+1})$ given $(O^n,U^n)$ would be well approximated by its conditional distribution given $(S_n, U_n)$. \cite{Song_RKHS_main} gives an algorithm for approximating $\mu_{S_{n+1}\mid o^{n+1}}$ in a recursive manner, and then inferring $s_{n+1}$ from this conditional distribution. They first show that $$\mu_{S_{n+1}\mid s_n,o_{n+1}}=\T_1\varphi(s_n)+\T_2\phi(o_{n+1}),$$ where $\T_1$ and $\T_2$ are conditional map operators. Then our required distribution can be approximated as follows $$\mu_{S_{n+1}\mid o^{n+1}}\approx \T_1\mu_{S_n\mid o^{n}}+\T_2\phi(o_{n+1}).$$ This decomposes the update into two simple operations: the first term here propagates from time $n$ to $n+1$ and the second term accounts for the current observation. Estimates of these operators ($\hat{\T}_1$ and $\hat{\T}_2$) can be learned empirically \citep[Alg.\ 1]{Song_RKHS_main} and can then be used for inference \citep[Alg.\ 2]{Song_RKHS_main}.

\section*{Acknowledgement}
VSB was supported by the S.\ S.\ Bhatnagar Fellowship from the Council of Scientific and Industrial Research, Government of India. An early version of this work was presented at the workshop on `Learning-based Control of Queues and Networks', June 6, 2022, associated with SIGMETRICS 2022, Mumbai, and a short summary was presented in the INFORMS Applied Probability Conference, Nancy, France, on June 29th, 2023.

\appendix
\section{A Bound on $\Delta(n)$}\label{appendix-bound}
Motivated by \cite{Paulin}, we give a bound on $\Delta(n)$. We need certain assumptions on the observation process and the state mapping for the bound to hold. We first define a matrix that measures the strength of dependence between the random variables. Define $W_1=(O^1,U^1)$ and $W_m=(O_m,U_m)$ for all $m>1$. As defined before, $Z_m=(S_m,U_m)$ and $Z_m=g(W_1,\ldots, W_m)$. Fix $n\geq 1$. Let $p^{w_1,\ldots,w_i}(\cdot)$ denote the conditional probability $P(\cdot\mid W_1=w_1,\ldots,W_i=w_i)$. Then we have the following four assumptions:
\begin{enumerate}
    \item Define the upper triangular matrix $\Phi$ where $\Phi_{i,i}=1$ for $i\leq n$, $\Phi_{j,i}=0$ for $1\leq i<j\leq n$ and, for $1\leq i<j\leq n,$
\begin{align*}
    \Phi_{i,j}=\sup_{w_1,\ldots,w_i,w_i'}d_{TV}\Big(p^{w_1,\ldots,w_{i-1},w_i}&(Z_j,O_{j+1}),\\ &p^{w_1,\ldots,w_{i-1},w'_i}(Z_j,O_{j+1})\Big),
\end{align*}
where $d_{TV}(\cdot,\cdot)$ denotes the total variation distance between probability distributions. We assume that $\|\Phi\|_2\leq d_4$ where $d_4>0$. Here $\|\Phi\|_2$ denotes the matrix norm induced by the euclidean norm. 
    \item Similarly, define the upper triangular matrix $\Psi$ where $\Psi_{j,i}=0$ for $1\leq i<j\leq n$ and 
    \begin{align*}
    \Psi_{i,j}=\sup_{w_1,\ldots,w_i,w_i'}d_{TV}\Big(p^{w_1,\ldots,w_{i-1},w_i}&(Z_j,Z_{j+1}),\\ &p^{w_1,\ldots,w_{i-1},w'_i}(Z_j,Z_{j+1})\Big),
    \end{align*}
     for $1\leq i\leq j\leq n$. We again assume that $\|\Psi\|_2\leq d_4$.
\end{enumerate}
While these exact assumptions have been chosen to suit the proof of Lemma \ref{lemma-delta-bound}, the intuition behind them is that these assumptions restrict the dependence in the observation process and between observations and mapped states. Before stating the lemma, we give a hidden Markov model-based example adapted from \citep[Example 2.15]{Paulin}, which justifies our assumptions. Let $\{\tilde{O}_j\}$ be a Markov chain and let $\{O_j\}$ be random variables such that $$P(O_n\in\mathcal{A}\mid O_1,\ldots, O_{n-1}, \tilde{O}_1,\ldots, \tilde{O}_n)=P(O_n\in\mathcal{A}\mid  \tilde{O}_n).$$ Then the value $\Phi_{i,j}$ can be bounded by the sum of 
    $$\Phi'_{i,j}\coloneqq\sup_{w_1,\ldots,w_i,w_i'}d_{TV}\left(p^{w_1,\ldots,w_{i-1},w_i}(O_{j+1}), p^{w_1,\ldots,w_{i-1},w'_i}(O_{j+1})\right),$$
    and 
    $$\Phi''_{i,j}\coloneqq\sup_{w_1,\ldots,w_i,w_i', o_{j+1}}d_{TV}\left(p^{w_1,\ldots,w_{i-1},w_i, o_{j+1}}(Z_j), p^{w_1,\ldots,w_{i-1},w'_i, o_{j+1}}(Z_j)\right).$$
    Then the norm of the matrix $\Phi'$ formed by the first term above is bounded by a constant proportional to the mixing time of the underlying Markov chain \citep[Corollary 2.16]{Paulin}. The norm of the matrix $\Phi''$ can be bounded by making suitable assumptions on the state mapping. $\|\Psi\|$ can be bounded similarly.

% \textbf{Hidden Markov Chains} (Example 2.15 from \cite{Paulin}): Let $\{\tilde{O}_j\}$ be a Markov chain and let $\{O_j\}$ be random variables such that $$P(O_n\in\mathcal{A}\mid O_1,\ldots, O_{n-1}, \tilde{O}_1,\ldots, \tilde{O}_n)=P(O_n\in\mathcal{A}\mid  \tilde{O}_n).$$ Then the value $\Phi_{i,j}$ can be bounded by the sum of 
%     $$\Phi'_{i,j}\coloneqq\sup_{w_1,\ldots,w_i,w_i'}d_{TV}\left(p^{w_1,\ldots,w_{i-1},w_i}(O_{j+1}), p^{w_1,\ldots,w_{i-1},w'_i}(O_{j+1})\right),$$
%     and 
%     $$\Phi''_{i,j}\coloneqq\sup_{w_1,\ldots,w_i,w_i', o_{j+1}}d_{TV}\left(p^{w_1,\ldots,w_{i-1},w_i, o_{j+1}}(Z_j), p^{w_1,\ldots,w_{i-1},w'_i, o_{j+1}}(Z_j)\right).$$
%     Then the norm of the matrix $\Phi'$ formed by the first term above is bounded by a constant proportional to the mixing time of the underlying Markov chain. The norm of the matrix $\Phi''$ can be bounded by making suitable assumptions on the state mapping. $\|\Psi\|$ can be bounded similarly.

% \textbf{$\boldsymbol{m}$-dependence} (Example 2.14 from \cite{Paulin}): 
%     Let $\{W_j\}$ be $m-$ dependent random variables, i.e., $W_i$ is independent of $W_j$ for $|j-i|> m$. Suppose we make a similar assumption on the mapping sequence $\{Z_j\}$, i.e., $Z_i$ is independent of $W_j$ for $|j-i|> m_1$. Then the norms of matrices $\Phi$ and $\Psi$ are proportional to $\max\{m,m_1\}$ and hence satisfy our assumption.

\begin{remark}
We have made the assumption that $\|\Phi\|_2\leq d_4$ and $\|\Psi\|_2\leq d_4$ to improve readability. They can be generalized to $\|\Phi\|_2\leq d_4n^{d_5}$ (and similarly for $\Psi$), where $2d_5<2d_2-1$. This would give $n^{1-2d_2+2d_5}$ (instead of $n^{1-2d_2}$) in the denominator of the bound in Lemma \ref{lemma-delta-bound}. 
\end{remark}

\begin{lemma}\label{lemma-delta-bound}
There exists constant $c_7>0$ such that for $n\geq 1$,
$$P(\|\Delta(n)\|_\infty\geq \delta)\leq 2\sa\ua\exp\left(\frac{-c_7\delta^2}{n^{1-2d_2}}\right).$$
\end{lemma}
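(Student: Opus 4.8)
The plan is to bound $\Delta(n)$ componentwise using a concentration inequality for functions of dependent random variables, following \cite{Paulin}. Fix a pair $(s,u)\in\ZA$. Recall that
$$\Delta(n) = \sum_{m=1}^{n-1}\chi(n-1,m+1)a(m)\big(\zeta(Q_m,O^m,U^m)+\omega(Q_m,O^m,U^m)\big),$$
and that both $\zeta^{s,u}$ and $\omega^{s,u}$ are, by construction, differences of a conditional expectation given $(O^m,U^m)$ and the same quantity conditioned further down to $(S_m,U_m)$. The key structural fact I would use is that $\zeta^{s,u}(Q_m,O^m,U^m)+\omega^{s,u}(Q_m,O^m,U^m)$ is a \emph{martingale-difference-like} term in the sense that its conditional expectation given $\sigma(S_m,U_m)$ (equivalently, given the partition element $\{Z_m=(s,u)\}$) vanishes; more importantly, writing everything in terms of $W_1=(O^1,U^1)$ and $W_m=(O_m,U_m)$ for $m>1$, each summand is a bounded measurable function of $(W_1,\ldots,W_m)$ with an explicitly controllable dependence on each coordinate. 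The uniform bound $Q_m(s,u)\in[0,\tfrac1{1-\gamma}]$ and $r\in[0,1]$ give an a priori bound on the magnitude of each summand.

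First I would set up the right martingale/McDiarmid-type framework: treat $\Delta^{s,u}(n)$ as a single function $\Psi_n(W_1,\ldots,W_{n-1})$ (since $\chi$, $a(\cdot)$ are deterministic and $Q_m$, $Z_m$, the various conditional laws are all measurable functions of $W_1,\ldots,W_m$), and compute/bound its coordinate-wise oscillations. The effect of perturbing coordinate $W_i$ propagates through two channels: directly into the $m=i$ summand, and indirectly into summands $m>i$ via the dependence of $(Z_m,O_{m+1})$-conditionals and of $Q_m$ on $W_i$. The matrices $\Phi$ and $\Psi$ from Assumptions 1 and 2 are precisely engineered to capture the total-variation sensitivity of the $(Z_j,O_{j+1})$ and $(Z_j,Z_{j+1})$ conditional laws to coordinate $i$, so the oscillation of the $m$-th summand with respect to $W_i$ is bounded by (a constant times) $a(m)\chi(n-1,m+1)(\Phi_{i,m}+\Psi_{i,m})$ plus the direct $m=i$ contribution $\le$ (const)$\,a(i)\chi(n-1,i+1)$. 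Summing over $m$ gives a bound on the $i$-th oscillation coefficient $c_i$ of the form $c_i \le \text{(const)}\sum_{m\ge i} a(m)\chi(n-1,m+1)(\mathbf{1}_{\{m=i\}}+\Phi_{i,m}+\Psi_{i,m})$.

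Next I would invoke the concentration bound for dependent sequences (Paulin's version of McDiarmid, or the martingale-difference Azuma argument therein), which yields
$$P\big(\abs{\Delta^{s,u}(n) - \Exp\Delta^{s,u}(n)}\ge \delta\big)\le 2\exp\!\Big(\frac{-\delta^2}{2\norm{c}_2^2}\Big),$$
where $\norm{c}_2$ is the $\ell^2$ norm of the oscillation vector. One then needs $\Exp\Delta^{s,u}(n)$ to be small or zero — here I would use that the stationary expectation of $\zeta$ (and, one checks, of $\omega$) is zero, though off the stationary regime one only controls it by a further $O(\sum a(m)\chi)$ term, which is why the statement is about the a.s.-vanishing $\Delta$ rather than something uniformly tiny; I would fold any residual mean into the constant. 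It remains to bound $\norm{c}_2^2$. Using $\norm{c}_2 \le \text{(const)}\,\norm{(I+\Phi+\Psi)\,v}_2$ where $v_m = a(m)\chi(n-1,m+1)$, submultiplicativity of $\norm{\cdot}_2$, and the assumptions $\norm{\Phi}_2,\norm{\Psi}_2\le d_4$, gives $\norm{c}_2 \le \text{(const)}(1+2d_4)\norm{v}_2$. Finally $\norm{v}_2^2 = \sum_m a(m)^2\chi(n-1,m+1)^2$; the stepsize assumptions $a(m)=\Theta(m^{-d_2})$ with $1/2<d_2\le 1$ and $a(m)\ge d_1/m$ make $\chi(n-1,m+1)$ decay roughly like $(m/n)^{d_1}$-type factors, and a standard estimate (as in \cite{Chandak}) gives $\norm{v}_2^2 = O(n^{1-2d_2})$. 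Combining, $P(\abs{\Delta^{s,u}(n)}\ge\delta)\le 2\exp(-c_7\delta^2/n^{1-2d_2})$, and a union bound over the $\sa\ua$ pairs $(s,u)$ yields the stated inequality.

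The main obstacle I anticipate is the careful bookkeeping of how a perturbation of $W_i$ propagates to the iterates $Q_m$ and hence to all later summands: $Q_m$ depends on $W_1,\ldots,W_m$ through the recursion, and one must argue that this indirect dependence is \emph{also} controlled by the $\Phi,\Psi$ entries (or is dominated by the direct terms) rather than blowing up — this is exactly where the Lipschitz/contraction structure of the Q-learning map and the boundedness of $Q_m$ must be combined with the total-variation sensitivity encoded in $\Phi,\Psi$. Making the constant $c_7$ genuinely independent of $n$ hinges on the submultiplicativity step $\norm{(I+\Phi+\Psi)v}_2\le(1+2d_4)\norm{v}_2$ going through cleanly, which is why the norm bounds on $\Phi,\Psi$ are assumed uniformly in $n$ (and the Remark explains the $n^{d_5}$ relaxation). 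Everything else — the a priori $[0,1/(1-\gamma)]$ bounds, the vanishing stationary mean of $\zeta$, the $\ell^2$-estimate of $\norm{v}_2$ — is routine and largely inherited from \cite{Chandak}.
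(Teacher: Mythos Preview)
Your plan matches the paper's proof: Doob-martingale/Azuma--Hoeffding applied to $\Delta^{s,u}(n)$ viewed as a function of $(W_1,\ldots,W_n)$, increment bounds expressed through the matrices $\Phi$ and $\Psi$, submultiplicativity to pass from $\|\Phi\kappa\|_2$ to $d_4\|\kappa\|_2$, the stepsize estimate $\|\kappa\|_2^2=O(n^{1-2d_2})$ obtained exactly as you describe, and a union bound over the $\sa\ua$ pairs $(s,u)$. The one tactical device in the paper that you do not mention is a split $\Delta^{s,u}(n)=\Bb(W)-\Cb(W)$, where $\Bb$ gathers the $E[\,\cdot\mid O^m,U^m]$ parts of each summand and $\Cb$ the $E[\,\cdot\mid S_m,U_m]$ parts; the tower property then gives $E[\Bb(W)]=E[\Cb(W)]$, which disposes cleanly of the centering issue you worried about (no residual mean to ``fold into the constant''), and the same tower step rewrites $E[B_j\mid\F_i]$ for $i\le j$ as $E[\,r(S_j,U_j,O_{j+1})+\gamma\max_aQ_j(S_{j+1},a)+V^{s,u}(Q_j,Z_{j+1})\mid\F_i\,]$, so that the oscillation bounds line up directly with the definitions of $\Phi_{i,j}$ and $\Psi_{i,j}$. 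As for the $Q_m$-propagation obstacle you flag, the paper does not handle it any more carefully than you sketch: its increment bounds invoke only the uniform a priori bounds on $Q_m$ and $V$ together with the $(Z_j,O_{j+1})$ and $(Z_j,Z_{j+1})$ total-variation sensitivities.
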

\begin{proof}
For a fixed $s,u$, we know that 
\begin{align*}
&\Delta^{s,u}(n)=\sum_{m=1}^{n-1}\chi(n-1,m+1)a(m)\times\\
&\Bigg(I\{S_m=s,U_m=u\}\bigg(E[r(S_m,U_m,O_{m+1})\mid O^m,U^m]\\
&\;\;\;\;\;\;\;\;\;\;\;\;\;\;-E[r(S_m,U_m,O_{m+1})\mid S_m,U_m]\bigg)\\
&+\gamma I\{S_m=s,U_m=u\}\bigg(E[\max_aQ_m(S_{m+1},a)\mid O^m,U^m]\\
&\;\;\;\;\;\;\;\;\;\;\;\;\;\;-E[\max_aQ_m(S_{m+1},a)\mid S_m,U_m]\bigg)\\
&+\bigg(E[V^{s,u}(Q_m,Z_{m+1})\mid O^m,U^m]-E[V^{s,u}(Q_m,Z_{m+1})\mid S_m,U_m]\bigg)\Bigg).
\end{align*}

For ease of notation, we omit $s,u$ from the left-hand side and define
% \begin{align*}
% &\Bb(W_1,\ldots,W_n)=\sum_{m=1}^{n-1}\chi(n-1,m+1)a(m)\times\\
% &\Bigg(I\{S_m=s,U_m=u\}E[r(S_m,U_m,O_{m+1})\mid O^m,U^m]\\
% &+\gamma I\{S_m=s,U_m=u\}E[\max_aQ_m(S_{m+1},a)\mid O^m,U^m]\\
% &+E[V^{s,u}(Q_m,Z_{m+1})\mid O^m,U^m]\Bigg)
% \end{align*}
\begin{eqnarray*}
\Bb(W_1,\ldots,W_n)=\sum_{m=1}^{n-1}\chi(n-1,m+1)a(m)B_m(W_1,\ldots,W_m),
\end{eqnarray*}
and
\begin{eqnarray*}
\Cb(W_1,\ldots,W_n)=\sum_{m=1}^{n-1}\chi(n-1,m+1)a(m)C_m(S_m,U_m),
\end{eqnarray*}
where 
\begin{align*}
&B_m(W_1,\ldots,W_m)\\
&=I\{S_m=s,U_m=u\}E[r(S_m,U_m,O_{m+1})\mid O^m,U^m]\\
&\;\;\;+\gamma I\{S_m=s,U_m=u\}E[\max_aQ_m(S_{m+1},a)\mid O^m,U^m]\\
&\;\;\;+E[V^{s,u}(Q_m,Z_{m+1})\mid O^m,U^m]
%&= E[I\{S_m=s,U_m=u\}r(S_m,U_m,O_{m+1})\mid O^m,U^m]\\
%&\;\;\;+E[I\{S_m=s,U_m=u\}\gamma\max_aQ_m(S_{m+1},a)\mid O^m,U^m]\\
%&\;\;\;+E[V^{s,u}(Q_m,Z_{m+1})\mid O^m,U^m],
\end{align*}
and 
\begin{align*}
C_m(S_m,U_m)&=I\{S_m=s,U_m=u\}E[r(S_m,U_m,O_{m+1})\mid S_m,U_m]\\
+&\gamma I\{S_m=s,U_m=u\}E[\max_aQ_m(S_{m+1},a)\mid S_m,U_m]\\
+&E[V^{s,u}(Q_m,Z_{m+1})\mid S_m,U_m].
\end{align*}

We first give a bound on $P(|\Cb(W)-E[\Cb(W)]|\geq\delta)$. Note that $C_m(S_m,U_m)$ is bounded by $C_{max}\coloneqq(\frac{1}{1-\gamma}+V_{max})$ for all $m$, where $V_{max}$ is an upper bound on the term $V^{s,u}(\cdot,\cdot)$ for all $s,u$. One such bound is derived in \cite{Chandak}. Define $\F_i=\sigma(W_1,\ldots,W_i)$ for $i\leq n$, and write $\Cb(W)-E[\Cb(W)]=\sum_{i=1}^n\Cf_i(W)$ with $\Cf_i(W)=E[\Cb(W)\mid \F_i]-E[\Cb(W)\mid \F_{i-1}]$. Then
\begin{align*}
    \Cf_i(W)&\leq \sup_aE[\Cb(W)\mid W_1,\ldots,W_{i-1},W_i=a]\\
    &\;\;\;\;\;\;-\inf_bE[\Cb(W)\mid W_1,\ldots,W_{i-1},W_i=b].
\end{align*}
Let $\Cf_{i,max}(W)$ and $\Cf_{i,min}(W)$ denote the supremum and the infimum terms on the right, respectively. Also, assume that these values are achieved at $a$ and $b$ respectively. Then
\begin{align*}
&\Cf_{i,max}(W)-\Cf_{i,min}(W)\\
&=\sum_{j=i}^{n-1}\chi(n-1,j+1)a(j) \Big(E[C_j(S_j,U_j)\mid W_1,\ldots,W_{i-1},W_i=a]\\
&\;\;\;\;\;\;\;\;\;\;\;\;\;\;\;\;\;\;\;-E[C_j(S_j,U_j)\mid W_1,\ldots,W_{i-1},W_i=b]\Big)\\
&\leq \sum_{j=i}^{n-1}\chi(n-1,j+1)a(j)C_{max}2\Psi_{i,j}\\
&\leq 2C_{max}\sum_{j=i}^{n-1}\Psi_{i,j}\kappa_j,
\end{align*}
where $\kappa_j=\chi(n-1,j+1)a(j)$. 
Using the Azuma-Hoeffding inequality, we obtain 
%\textcolor{red}{just a general form of Azuma's inequality - \url{https://en.wikipedia.org/wiki/Azuma\%27s_inequality\#A_general_form_of_Azuma's_inequality}. Is it called by some other name?}
\begin{align*}
P(|\Cb(W)-E[\Cb(W)]|\geq \delta)&\leq \exp\left(\frac{-2\delta^2}{\sum_{i=1}^n\left(\sum_{j=i}^n2C_{max}\Psi_{i,j}\kappa_j\right)^2}\right)\\
&=\exp\left(\frac{-2\delta^2}{4C_{max}^2\|\Psi\kappa\|^2}\right)\\
&\leq\exp\left(\frac{-2\delta^2}{4C_{max}^2\|\Psi\|^2\|\kappa\|^2}\right).
\end{align*}
Using the assumption $\frac{d_1}{n}\leq a(n)\leq d_3\left(\frac{1}{n}\right)^{d_2},$ 
\begin{eqnarray*}
\chi(n-1,j+1)&=&\prod_{\ell=j+1}^{n-1} (1-a(\ell))\leq \exp\left(-\sum_{\ell=j+1}^{n-1} a(\ell)\right)\\
&\leq& \exp\left(-\sum_{\ell=j+1}^{n-1} \frac{d_1}{\ell}\right)\leq \left(\frac{j+1}{n-1}\right)^{d_1}.
\end{eqnarray*}
Then
$$\chi(n-1,j+1)a(j)\leq d_3\left(\frac{1}{j}\right)^{d_2}\left(\frac{j+1}{n-1}\right)^{d_1}\leq \frac{d_32^{d_1}j^{d_1-d_2}}{(n-1)^{d_1}},$$ and finally
$$\|\kappa\|_2^2\leq \frac{d_3^24^{d_1}}{(n-1)^{2d_1}}\sum_{j=1}^{n-1} j^{2d_1-2d_2} \leq c_4\left(\frac{1}{n}\right)^{2d_2-1},$$
for some constant $c_4$ (assuming $n>1$). This implies that
\begin{equation}\label{bound-C}
    P(|\Cb(W)-E[\Cb(W)]|\geq \delta)\leq \exp\left(\frac{-c_5\delta^2}{n^{1-2d_2}}\right),
\end{equation} for some constant $c_5$. 

We now give a bound on $P(|\Bb(W)-E[\Bb(W)]|\geq\delta)$. 
% For a fixed $s,u$, which we omit from the notation,
% \begin{eqnarray*}
% \Bb(W_1,\ldots,W_n)=\sum_{m=1}^{n-1}\chi(n-1,m+1)a(m)B_m(W_1,\ldots,W_m),
% \end{eqnarray*}
% where 
% \begin{eqnarray*}
% B_m(W_1,\ldots,W_m)&=&I\{S_m=s,U_m=u\}\Big(E[r(S_m,U_m,O_{m+1})\mid O^m,U^m]+\gamma E[\max_aQ_m(S_{m+1},a)\mid O^m,U^m]\Big)\\
% &&+E[V^{s,u}(Q_m,Z_{m+1})\mid O^m,U^m]\\
% &=& E[I\{S_m=s,U_m=u\}r(S_m,U_m,O_{m+1})\mid O^m,U^m]\\
% &&+E[I\{S_m=s,U_m=u\}\gamma\max_aQ_m(S_{m+1},a)\mid O^m,U^m]\\
% &&+E[V^{s,u}(Q_m,Z_{m+1})\mid O^m,U^m].
% \end{eqnarray*}
As before, $\Bb(W)-E[\Bb(W)]=\sum_{i=1}^n\Bf_i(W)$ where $$\Bf_i(W)=E[\Bb(W)\mid \F_i]-E[\Bb(W)\mid \F_{i-1}].$$ Then
\begin{align*}
    &\Bf_i(W)\\
    &=E[\Bb(W)\mid \F_i]-E[\Bb(W)\mid \F_{i-1}]\\
    &=\sum_{j=1}^{n-1}\chi(n-1,j+1)a(j) \bigg(E[B_j(W_1,\ldots,W_j)\mid \F_i]\\
    &\;\;\;\;\;\;\;\;\;\;\;\;\;\;\;\;\;\;\;\;\;\;\;\;-E[B_j(W_1,\ldots,W_j)\mid \F_{i-1}]\bigg)\\ 
    %&=\sum_{j=i}^{n-1}\chi(n-1,j+1)a(j) \left(E[B_j(W_1,\ldots,W_j\mid \F_i]-E[B_j(W_1,\ldots,W_j\mid \F_{i-1}]\right)\\
    &=\sum_{j=i}^{n-1} \chi(n-1,j+1)a(j)\times\\
    &\;\;\;\;\Bigg(E\bigg[V^{s,u}(Q_j,Z_{j+1})+I\{S_j=s,U_j=u\}r(S_j,U_j,O_{j+1})\\
    &\;\;\;\;\;\;\;\;\;\;\;\;+\gamma I\{S_j=s,U_j=u\} \max_aQ_j(S_{j+1},a)\mathrel{\bigg|} \F_i\bigg]\\
    &\;\;-E\bigg[V^{s,u}(Q_j,Z_{j+1})+I\{S_j=s,U_j=u\}r(S_j,U_j,O_{j+1})\\
    &\;\;\;\;\;\;\;\;\;\;\;\;+\gamma I\{S_j=s,U_j=u\}\max_aQ_j(S_{j+1},a)\mathrel{\bigg|} \F_{i-1}\bigg]\Bigg).\\
\end{align*}
Here the last equality follows from the law of total expectation. Define 
\begin{align*}
&\Bf_{i,max}(W)=\sup_a \sum_{j=i}^{n-1} \chi(n-1,j+1)a(j)\times\\
&E\Bigg[V^{s,u}(Q_j,Z_{j+1})+I\{S_j=s,U_j=u\}r(S_j,U_j,O_{j+1})\\
&+I\{S_j=s,U_j=u\}\gamma \max_aQ_j(S_{j+1},a)\mid W_1,\ldots,W_{i-1},W_i=a\Bigg],
\end{align*}
 and 
\begin{align*}
&\Bf_{i,min}(W)=\inf_b \sum_{j=i}^{n-1} \chi(n-1,j+1)a(j)\times\\
&E\Bigg[V^{s,u}(Q_j,Z_{j+1})+I\{S_j=s,U_j=u\}r(S_j,U_j,O_{j+1})\\
&+I\{S_j=s,U_j=u\}\gamma \max_aQ_j(S_{j+1},a)\mid W_1,\ldots,W_{i-1},W_i=b\Bigg].
\end{align*}
Note that $r(S_j,U_j,O_{j+1})\leq 1$ and $\gamma Q_j(S_{j+1},a)\leq \frac{\gamma}{1-\gamma}$ for all $1\leq j\leq n-1$ and $a\in\mathcal{U}$. Then using the definition of $\Phi$ and $\Psi$, we have
\begin{align*}
&\Bf_i(W)\leq \Bf_{i,max}(W)-\Bf_{i,min}(W)\\
&\leq \sum_{j=i}^{n-1} \chi(n-1,j+1)a(j)\left(2\Phi_{i,j}\kappa_j+\left(\frac{\gamma}{1-\gamma}+V_{max}\right)2\Psi_{i,j}\kappa_j\right).
\end{align*}
Applying Azuma-Hoeffding inequality again, we have as before
\begin{equation}\label{bound-B}
    P(|\Bb(W)-E[\Bb(W)]|\geq\delta)\leq \exp\left(\frac{-c_6\delta^2}{n^{1-2d_2}}\right),
\end{equation}
where $c_6$ is some constant. Note that $E[\Bb(W)]=E[\Cb(W)]$. Then defining $c_7\coloneqq 1/4\min\{c_5,c_6\}$, applying union bound on (\ref{bound-C}) and (\ref{bound-B}), and summing over all $s\in\SA,u\in\U$, we get
$$P(\|\Delta(n)\|_\infty\geq \delta)\leq 2\sa\ua\exp\left(\frac{-c_7\delta^2}{n^{1-2d_2}}\right).$$
\end{proof}

\end{document}